%% escapetimes_dd_mm_yy.tex
%% Pete Ashwin and Claire Postlethwaite
%%
%% Last edit 16/02/16 (cmp/pa)
%%
%%
%\documentclass[aip,graphicx]{revtex4-1}

%\draft

\documentclass[12pt]{article}

\usepackage{amsmath,amssymb}
\usepackage{graphicx}
\usepackage{fullpage}
\usepackage{hyperref}

\newtheorem{theorem}{Theorem}[section]
\newtheorem{lemma}[theorem]{Lemma}
\newtheorem{conjecture}{Conjecture}

\newcommand{\argmin}{\mathrm{argmin}}
\renewcommand{\d}{\textrm{d}}
\newcommand{\R}{\mathbb{R}}

\newcommand{\N}{\mathbb{N}}
\newcommand{\proof}{\noindent{\bf Proof:}~}
\newcommand{\qed}{\hfill{\bf QED}}

\newcommand{\E}{\mathcal{E}}
\newcommand{\V}{\mathcal{V}}

\newcommand{\cP}{\mathcal{P}}

\newcommand{\Prob}{\mathrm{Prob}}

\begin{document}

\title{Quantifying noisy attractors: from heteroclinic to excitable networks}

\author{Peter Ashwin\thanks{Center for Systems, Dynamics and Control/ Department of Mathematics, University of Exeter, Exeter EX4 4QF, UK and EPSRC Centre for Predictive Modelling in Healthcare, University of Exeter, Exeter, EX4 4QJ, UK. Email: P.Ashwin@exeter.ac.uk} and Claire Postlethwaite\thanks{ Department of Mathematics, University of Auckland, Auckland, 1142, New Zealand. Email: c.postlethwaite@auckland.ac.nz}}
%\email{P.Ashwin@exeter.ac.uk}
%\altaffiliation{EPSRC Centre for Predictive Modelling in Healthcare, University of Exeter, Exeter, EX4 4QJ, UK}
%\affiliation{Center for Systems, Dynamics and Control, Department of Mathematics, University of Exeter, Exeter EX4 4QF, UK}

%\author{Claire Postlethwaite}
%\email{c.postlethwaite@auckland.ac.nz}
%\affiliation{Department of Mathematics, University of Auckland, Auckland, 1142, New Zealand}

%\date{\today}
\maketitle

\begin{abstract}
Attractors of dynamical systems may be networks in phase space that can be heteroclinic (where there are dynamical connections between simple invariant sets) or excitable (where a perturbation threshold needs to be crossed to a dynamical connection between ``nodes''). Such network attractors can display a high degree of sensitivity to noise both in terms of the regions of phase space visited and in terms of the sequence of transitions around the network. The two types of network are intimately related---one can directly bifurcate to the other.

In this paper we attempt to quantify the effect of additive noise on such network attractors. Noise increases the average rate at which the networks are explored, and can result in ``macroscopic'' random motion around the network. We perform an asymptotic analysis of local behaviour of an escape model near heteroclinic/excitable nodes in the limit of noise $\eta\rightarrow 0^+$ as a model for the mean residence time $T$ near equilibria. The heteroclinic network case has $T$  proportional to $-\ln\eta$ while the excitable network has $T$ given by a Kramers' law, proportional to $\exp (B/\eta^2)$. There is singular scaling behaviour (where $T$ is proportional to $1/\eta$) at the bifurcation between the two types of network.

We also explore transition probabilities between nodes of the network in the presence of anisotropic noise. For low levels of noise, numerical results suggest that a (heteroclinic or excitable) network can approximately realise any set of transition probabilities and any sufficiently large mean residence times at the given nodes. We show that this can be well modelled in our example network by multiple independent escape processes, where the direction of first escape determines the transition. This suggests that it is feasible to design noisy network attractors with arbitrary Markov transition probabilities and residence times.
\end{abstract}

%\pacs{}

%\tableofcontents

\section{Introduction}
\label{sec:intro}

It is well known that noise can play a fundamental role in modifying the qualitative behaviour of a dynamical system. This is especially the case for what we term ``network attractors'' that include a number of invariant sets connected in some dynamical way. In this paper we consider the effect of noise on two related types of network attractor: heteroclinic networks (equilibria connected by heteroclinic orbits) and excitable networks (equilibria connected by orbits that start within some distance of the starting equilibrium). As noted in previous work~\cite{AshOroWorTow07,AshPos15}, a bifurcation of the equilibria in a symmetric system may lead to a transition from heteroclinic to excitable attractor.

For heteroclinic cycle attractors, it is well known that addition of noise can cause a non-ergodic attractor to become an approximately periodic ``noisy'' limit cycle \cite{StoHol90,StoArm99}. For excitable systems, the creative properties of noise in a potential landscape have been well studied in the literature on stochastic resonance \cite{Linder_etal_2004,Benzi81} where Kramers' law for escape times near a stable equilibrium coupled with global reconnection can lead to approximately periodic behaviour. Both of these effects can be thought of as a regularizing effect of adding noise. 

There is another noise-induced effect that seems to have received less attention (notable exceptions being \cite{ArmStoKir03,Bakhtinb}) If there is more than one outgoing direction for a connection (either heteroclinic, or excitable) from an equilibrium then it is not immediately clear which connection will be followed by the trajectory. On the one hand, there may be one preferred direction corresponding to the most unstable eigenvalue (in the case of a heteroclinic connection) or the shallowest potential saddle (in the case of an excitable connection). On the other hand, if the noise is anisotropic then variations in noise amplitudes in different directions can make one direction preferred over another. In fact, the connection chosen will be the result of a competition between the noise and dynamical processes for a number of possible outcomes. This results in a macroscopically observable randomness in the dynamics, where the noise forces dynamical behaviour to explore the network in a random manner.

The first aim of this paper is to present a qualitative exploration of the effect of additive noise on network attractors. We characterize the scaling of mean residence times near equilibria for both the heteroclinic and excitable cases, using a mixture of asymptotic analysis of a simplified problem and numerical examples. We also study how the noise determines the transition probabilities from a given node.

The second aim of this paper is to present a design principle for noisy networks. For a given (but arbitrary) set of mean residence times and transition probabilities, we argue it is possible to find a network attractor that is well-modelled by a first order Markov process where the mean residence times and transition probabilities are as desired. As previously shown \cite{AshPos13}, in the small noise limit, motion around a noisy network may be modelled as a one-step Markov chain as long as the local values of the eigenvalues do not cause ``lift-off'' and longer time correlations in the trajectory \cite{ArmStoKir03,Bakhtin,Bakhtinb}.

The paper is structured as follows: In Section~\ref{sec:3graph} we give two illustrative examples, where low amplitude noise added to a dynamical system that realises a ``network attractor'' gives rise to a random walk around the ``noisy network''. Section~\ref{sec:general} introduces network attractors for deterministic and noisy systems in general terms, along the lines of \cite{AshPos15}. We quantify the trajectory in terms of random variables for the residence times and transitions between network nodes. The means of these random variables give the mean residence time and the transition probabilities. In section~\ref{subsec:connecting} we give some general hypotheses on the nature of noisy network attractors and, assuming these hypotheses, we conjecture that any set of transition probabilities and sufficiently long mean residence times may be approximately realised by appropriate choice of noise amplitudes.

Section~\ref{sec:residence} models the mean residence times at each node by considering escape from a region near an equilibrium for the case where there is a connection in only one dimension. We find low-noise asymptotic scalings of the mean residence time on both sides of, as well as at the bifurcation between, heteroclinic and excitable connections. These scalings are verified and illustrated in Section~\ref{sec:onedimsims} using numerical simulations for a one dimensional SDE where there is transition from heteroclinic to excitable connection on changing a parameter.

Section~\ref{sec:switching} examines transition probabilities on a network. We consider a system with a simple (but fully nonlinear) noisy network attractor, adapting an example from our previous work~\cite{AshPos15}. For this example, we show that one can design the transition probabilities and mean residence times for a noisy network attractor by specifying the amplitudes of additive noise within the system. Details of the construction are included in Appendix~\ref{app:bidirthree}. 

Although the general problem of relating the transition probabilities to the noise amplitudes seems to be difficult, it seems that the switching can be well-modelled as a competition between two independent escape processes, and we investigate this in Section~\ref{sec:multiplesc}. In the case where the escape distribution is close to exponential we show that one can approximate the transition probability simply from the mean escape times. More generally the transition probability is determined by the distributions of escape times, not just their means.

Finally Section~\ref{sec:discuss} gives a discussion of some implications, possible areas of application, and open questions raised by this work. 

\subsection{Example: random walks on three-node networks}
\label{sec:3graph}

In order to motivate the sort of dynamics we are considering, consider the finite graphs shown in Figure~\ref{fig:3graphs}. Appendix~\ref{app:3graphs} describes dynamical systems of the form described in our previous work~\cite{AshPos15} that realise each of the networks shown in Figure~\ref{fig:3graphs}; see equations~\eqref{eq:C3system1} and~\eqref{eq:C3system2} for details. The aim of this paper will be to quantify both the mean residence times and the transition probabilities for such a network attractor in the presence of noise.

\begin{figure}%
\begin{center}
\setlength{\unitlength}{1mm}
\begin{picture}(110,50)(0,0)
\put(5,5){\includegraphics[trim= 0cm 0cm 0cm 0cm,clip=true,width=50mm]{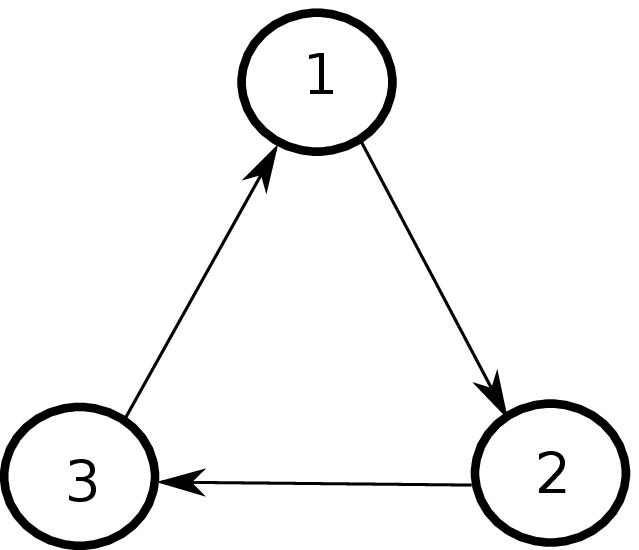}}
\put(65,5){\includegraphics[trim= 0cm 0cm 0cm 0cm,clip=true,width=50mm]{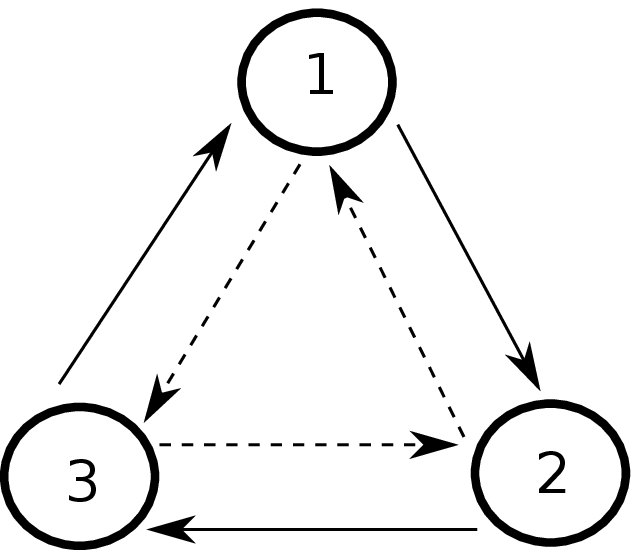}}
\put(5,45){(a)}
\put(65,45){(b)}
\end{picture}
\end{center}
\caption{Two graphs with three nodes. In (a) transitions can only be made in one direction, and in (b), transitions may be made in both directions. We realise each of these graphs as heteroclinic/excitable network attractors with added noise using the system described in Appendix A. In both cases the mean residence time at equilibria is infinite unless there is noise. For case (a) we add noise of strength $\eta_{cw}$ in the direction of the connections. In case (b) we add noise of strengths $\eta_{cw}$ and $\eta_{acw}$ to transitions in the clockwise and anticlockwise directions, respectively. See Figure~\ref{fig:runs} for example timeseries showing trajectories near noisy network attractors that realise these graphs.}%
\label{fig:3graphs}%
\end{figure}

We show in Figure~\ref{fig:runs} some numerical simulations of typical runs starting at node $\xi_1$. The one dimensional observable $S(t)$ (see Appendix~\ref{app:3graphs}) has the property that $S(t)\approx k$ whenever the trajectory is near the equilibrium $\xi_k$. The components $p_j$ are approximately equal to $1$ when the trajectory is near the equilibrium $\xi_k$, and the components $y_j$ are non-zero during the transitions between equilibria. Here $p_1$ corresponds to the transition from $\xi_1$ to $\xi_2$.
Figures~\ref{fig:runs}(a) and (c) show heteroclinic and excitable realisations, respectively, for the uni-directional cycle shown in Figure~\ref{fig:3graphs}(a). 
Figures~\ref{fig:runs}(b) and (d) show heteroclinic and excitable realisations, respectively, for the bi-directional cycle shown in Figure~\ref{fig:3graphs}(b). Here, transitions are possible in both directions. 
Close inspection reveals that there is much greater variability in residence times for the excitable realisations than for the heteroclinic realisation. In particular, the time-series for the heteroclinic uni-directional ring (Figure~\ref{fig:runs}(a)) is approximately periodic.
In the noise-free case (not shown), excitable realisations remains at the (now stable) starting state, while heteroclinic realisations perform an asymptotic slowing down between the nodes in the graph.

\begin{figure}%
\begin{center}
\includegraphics[width=17cm,trim={0cm 0cm 0cm 0cm},clip=]{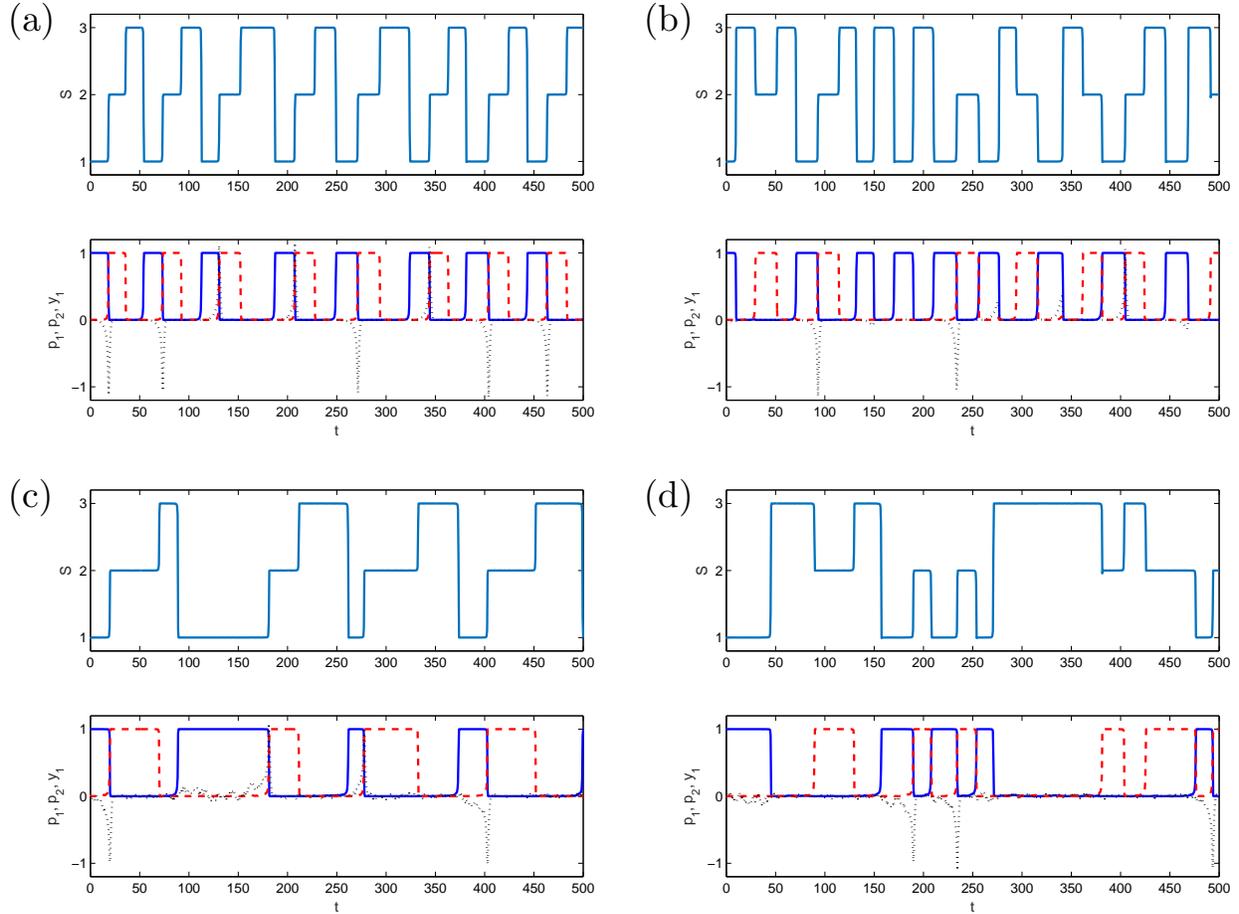}~
\end{center}
\caption{Timeseries of $S(t)$ (top), and $p_1$, $p_2$ and $y_1$ (bottom) for typical trajectories corresponding to realisations of the graphs shown in Figure~\ref{fig:3graphs} with weak noise; the system and parameters are described in Appendix~\ref{app:3graphs}. If the system state is close to the equilibrium $\xi_k$ that represents the $k$th node in the graph then the observable satisfies $S(t)\approx k$. For these runs we show time series for $\eta_{cw}=\eta_{acw}=0.03$.  Panels (a) and (c) are for the uni-directional ring (shown in Figure~\ref{fig:3graphs}(a)), and panels (b) and (d) are for the bi-directional ring  (shown in Figure~\ref{fig:3graphs}(b)). Panels (a) and (b) show heteroclinic realisations and panels (c) and (d) show excitable realisations of the graphs in Figure~\ref{fig:3graphs}.}%
\label{fig:runs}%
\end{figure}

\section{Deterministic and noisy network dynamics}
\label{sec:general}

Consider an autonomous ordinary differential equation (ODE)
\begin{equation}
\frac{d}{dt} x= f(x,\nu)
\label{eq:ode}
\end{equation}
on $x\in\R^d$ where $t\geq 0$, $f(x,\nu)$ is a smooth nonlinear function, and $\nu\in\R$ is a bifurcation parameter. We first define more precisely what we mean by heteroclinic and excitable networks in such a deterministic system before considering the statistics of noise-perturbed versions.

\subsection{Networks in phase space}

We say there is a {\em heteroclinic connection} from one equilibrium $\xi_i$ to another $\xi_j$  for (\ref{eq:ode}) if
$$
W^u(\xi_i)\cap W^s(\xi_j)\neq \emptyset.
$$
We say (\ref{eq:ode}) has a {\em heteroclinic network attractor} if there is an asymptotically stable compact connected set $\Sigma\subset\R^d$ such that for some set of saddle equilibria $\{\xi_k\}_{k=1}^{N}$ we have
\begin{equation}
\Sigma= \bigcup_{k=1}^{N} W^u(\xi_k)
\label{eq:hetnet}
\end{equation}
where
$$
W^u(\xi)=\{x~:~\alpha(x)=\{\xi\}\},~~W^s(\xi)=\{x~:~\omega(x)=\{\xi\}\}
$$
(these sets are manifolds if the saddles are hyperbolic). This definition is fairly weak (cf \cite{AshPos15}) - we do not necessarily assume hyperbolicity of the saddles or even chain recurrence of the network. However we assume that the closure of all $W^u(\xi_k)$ are contained within the network (the network is "clean" \cite{field_96}) as we will be concerned with behaviour that remains close to the network under stochastic perturbation.

We say the system (\ref{eq:ode}) has an {\em excitable connection for amplitude $\delta>0$} from  one equilibrium $\xi_i$ to another $\xi_j$ if
$$
B_{\delta}(\xi_i)\cap W^s(\xi_j)\neq \emptyset.
$$
This connection has {\em threshold} $\delta_{th}$ if
$$
\delta_{th}= \inf \{\delta>0~:~B_{\delta}(\xi_i)\cap W^s(\xi_j)\neq \emptyset\}.
$$
A set $\Sigma$ is an {\em excitable network for amplitude $\delta>0$} \cite{AshPos15} if there is a set of equilibria $\{\xi_i\}$ such that 
\begin{equation}
\Sigma=\Sigma_{\mathrm{exc}}(\{\xi_i\},\delta):=\bigcup_{i,j=1}^{n} \{\phi_t(x)~:~x\in B_{\delta}(\xi_i) \mbox{ and }t>0\}\cap W^s(\xi_j)
\label{eq:exnet}
\end{equation}
As noted in \cite{AshPos15}, a heteroclinic connection is also an excitable connection with $\delta_{th}=0$ though the converse is not the necessarily true.

An excitable network for amplitude $\delta$ means if we can follow an arbitrary path on the network by a mixture of trajectories and ``jumps'' of maximum size $\delta$. In a previous paper \cite{AshPos15} we gave a particular construction of coupled nonlinear systems (\ref{eq:ode}) where an arbitrary network can be constructed as a heteroclinic or as an excitable network in phase space.

\subsection{Noisy network attractors}

For cases where the noise-free system (\ref{eq:ode}) has a network attractor $\Sigma$ we will investigate the associated autonomous stochastic differential equation (SDE)
\begin{equation}
dx= f(x,\nu)dt+\eta dw
\label{eq:sde}
\end{equation}
on $x\in\R^d$ where $t\geq 0$, $w(t)$ is $d$-dimensional Brownian motion, and $\eta=\mbox{diag}(\eta_1,\ldots,\eta_d)$. We are concerned with investigating the influence of noise on the associated noisy system (\ref{eq:sde}) under the assumption that trajectories remain close to the heteroclinic or excitable network attractor. We express this more precisely in Section~\ref{subsec:connecting}.
%To this end, let us suppose that $\Sigma$ is a network attractor for the noise-free system. Suppose also that for any neighbourhood of $\Sigma$ and $\epsilon>0$, there is an $\eta_{\max}>0$ such that whenever $|\eta|<\eta_{\max}$ the probability of long-term trajectories of the noisy system lying within this neighbourhood is greater than $1-\epsilon$.

In what follows we will consider $\vartheta\in \Omega$ where $\Omega$ represents the possible noise trajectories. Formally one can understand the solution of (\ref{eq:sde}) as a ``random dynamical system'' \cite{Arnold98}: the solution $x(t)$ can be viewed as a cocycle over the noise trajectory: we write
\begin{equation}
\begin{array}{rcl}
x(t+s) &=& \varphi(t,\vartheta(s),x(s))\\
\vartheta(t+s) &=& \theta(t,\vartheta(s))
\end{array}
\label{eq:rds}
\end{equation}
for any $t,s\geq 0$: note that $\varphi:\R^+\times\Omega\times \R^d\rightarrow \R^d$ is a cocycle that represents the evolution of the system with noise $\vartheta(s)$ whilst $\theta:\R^+\times \Omega\rightarrow \Omega$ represents the evolution of the noise - typically just a shift in time. We will assume there is a measure $\mu_\Omega$ (such as Wiener measure) on $\Omega$, and we assume $\vartheta$ is chosen from a set of full measure with respect to $\mu_\Omega$. We will also assume that the random dynamical system has an attractor that supports a natural ergodic measure $M$ on $\Omega\times \R^d$ whose projection onto $\Omega$ is $\mu_\Omega$ and whose marginals are absolutely continuous with respect to $d$-dimensional Lebesgue measure on the fibres $\R^d$. For any $A\subset \R^d\times \Omega$ we write $\Prob(A):=M(A)$. In heuristic terms we can think of $\Prob(\cdot)$ as assigning probabilities to possible asymptotic states of the noisy system.

\subsection{Itineraries on attracting networks}
\label{sec:itineraries}

Let us assume that typical trajectories of (\ref{eq:sde}) spend most of their time close to a network $\Sigma$ of the form (\ref{eq:hetnet}) or (\ref{eq:exnet}). We attempt to describe the motion in terms of the itinerary around the network, i.e. the sequence and timing of visits to the equilibrium nodes $\xi_k$.

Fix a tolerance $h>0$ (such that $|\xi_p-\xi_q|>2h$ for all $p\neq q$) and define
$$
K(x) := \left\{\begin{array}{rl}
i & \mbox{ if } |x-\xi_i|\leq h\\
0 & \mbox{ otherwise.}\end{array}\right.
$$
For a trajectory $x(t)$ we define
$$
\tilde{K}(t) = \{ K(s) ~:~ s = \sup\{ s\leq t~:~ K(s)\neq 0\} \}
$$
which gives the ``last visited node'' and if we start near a node this will always be non-zero. if $\tilde{K}(x(t))=i$ we say $x(t)$ is {\em close to the $i$th node}.

For $|\eta|$ small and trajectories that remain close to $\Sigma$ we expect that
$$
\lim_{T\rightarrow \infty} \frac{1}{T} \int_{s=0}^T |K(x(s)) -\tilde{K}(s)|\,ds
$$
to be small, i.e. $K(x(t))=\tilde{K}(t)$ most of the time. For a given initial condition $x_0$, amplitude and realisation  of the noise $\vartheta$, the trajectory $x(t)$ divides up the $t>0$ into an {\em itinerary}. This is the unique sequence of {\em epochs}
$$
\{ (i_j(x_0,\vartheta),s_j(x_0,\vartheta)) ~:~ j\in\N\}
$$
such that $\tilde{K}(t)=i_j$ for the interval $t\in[s_j,s_{j+1})$, and $i_{j+1}\neq i_j$. As in \cite{AshPos13}, the times of entry $s_j$ are increasing while the {\em duration} of the $j$th epoch we define to be 
$$
\tau_j(x_0,\vartheta)=s_{j+1}(x_0,\vartheta)-s_j(x_0,\vartheta).
$$

We are interested in various statistics of this itinerary including the distribution of {\em residence times} for the $j$th node:
$$
\rho_j(\tau)=\Prob\left(\{(\tilde{x},\tilde{\vartheta})~:~\tau_\ell=\tau \mbox{ given that }i_\ell(\tilde{x},\tilde{\vartheta})=j\}\right).
$$
The {\em mean residence time} at the $j$th node is the expected value of $\tau_j$, i.e.
\begin{equation}
T_j = \int_{\tau=0}^{z\infty} \tau \rho_j(\tau)\,d\tau
\end{equation}
If the network has several outgoing connections from a node one might expect the addition of noise to enhance random switching; we show that this can, at least in our case, be well modelled as a competition between independent first escape time processes in the different directions, such that the residence time is the minimum first escape time and the transition probability is the direction of first escape.

For a given (finite) sequence of nodes $\{j_k~:~k=1,\cdots m\}$ we can examine the probability of seeing this sequence of nodes as
\begin{equation}\label{eq:prob}
\cP(j_1,\ldots,j_m) := \Prob \left( \{(\tilde{x},\tilde{\vartheta})~:~i_{\ell}(\tilde{x},\tilde{\vartheta})=j_{\ell} \mbox{ for }\ell=1,\ldots,m\} \right)
\end{equation}
and use this to investigate the asymptotic probabilities being at state $j$, 
$\pi_j:=\cP(j)$ (assuming that $\pi_j>0$). In equation~\eqref{eq:prob}, we can think of the the initial condition of the trajectory $\tilde{x}$ being chosen randomly from the attractor, and then the probability is taken with respect to that initial condition and all possible noise trajectories.

More precisely, the {\em transition probability} that the next state is $j_2$ given we are at state $j_1$ is
\begin{equation}
\pi_{j_1,j_2}:=\frac{1}{\pi_{j_1}}\cP(j_1,j_2)
\end{equation}

As in \cite{AshPos13} we say the {\em transitions are memoryless} if
\begin{equation}
\cP(j_1,\ldots,j_m)=\cP(j_1,\ldots,j_{m-1}) \pi_{p,q}
\label{eq:mless}
\end{equation}
for all $p,q$ and any sequence $j_1,\cdots,j_m$ where $j_{m-1}=p$ and $j_m=q$. As noted in \cite{AshPos13}, in many cases we can expect the transitions to be asymptotically memoryless (i.e. (\ref{eq:mless}) holds with an error that goes to zero as the noise goes to zero), in which case the transitions are well modelled by a first order Markov chain where the transition probabilities are $\pi_{p,q}$.

More precisely we say for some $\epsilon>0$ that the transitions are {\em $\epsilon$-memoryless} if
\begin{equation}
|\cP(j_1,\ldots,j_m)-\cP(j_1,\ldots,j_{m-1}) \pi_{p,q}|<\epsilon
\label{eq:deltamless}
\end{equation}
for all $p,q$ and any sequence $j_1,\cdots,j_m$ where $j_{m-1}=p$ and $j_m=q$.

\subsection{Connecting microscopic and macroscopic randomness}
\label{subsec:connecting}

An important question that we aim to address in the remainder of this paper is to understand how random variables that determine the itineraries of trajectories of a noisy network attractor for (\ref{eq:sde}) are influenced by the dynamics of the noise-free system (\ref{eq:ode}) and the noise amplitudes. In particular, we are concerned with systems where in the limit of asymptotically low additive noise, all of the mass of the attractor is centred on the network nodes. More precisely, we consider systems of the form (\ref{eq:sde}) such that 
\begin{itemize}
\item[(H1)] The noise-free system has a network attractor $\Sigma$ between a finite set of equilibria $\{\xi_i\}_{i=1}^N$ and any connection from $\xi_i$ to $\xi_j$ has added noise of amplitude $\eta_{ij}$
\item[(H2)] For fixed tolerance $h>0$ and any $\epsilon>0$, there is an $\eta>0$ such that whenever $|\eta_{ij}|<\eta$ for all $i,j$ any typical trajectory $x(t)$ with itinerary $K(t)$ will satisfy
$$
\frac{1}{t} \int_{s=0}^{t} \delta_{\tilde{K}(x(s)),K(x(s))} \, ds <\epsilon
$$
i.e. the proportion of time where trajectory is not close to an equilibrium is arbitrarily small.
\item[(H3)] For any $\epsilon>0$ there is an $E$ such that whenever $\eta_{ij}<E$ for all $i,j$ then the transitions are $\epsilon$-memoryless.
\end{itemize}

Making the above assumptions, we conjecture that the (microscopic) noise amplitudes $\eta_{ij}>0$ can be chosen to realise (macroscopic) noisy network dynamics with any given statistics (that is, mean residence times $T_j$ and transition probilities $\pi_{i,j}$), as long as the residence times are sufficiently long. We believe that (H1)-(H3) are reasonable assumptions to make, and in particular, can be numerically verified for the example networks we give in Section~\ref{sec:3graph}. Bakhtin has results \cite[Theorem 6.1]{Bakhtinb} for the limiting invariant measure for some heteroclinic cycles, that implies (H2). Hypothesis (H3) is discussed in more detail in our previous work~\cite{AshPos13} and also by Bakhtin~\cite[Section 10]{Bakhtinb}. (H3) can be violated for heteroclinic networks, if parameters are chosen so that there is `lift-off'~\cite{ArmStoKir03}. This may be the case if there are outgoing eigenvalues that are stronger than the incoming eigenvalues at an equilibrium. For excitable networks we do not expect (H3) to be easily violated.

\begin{conjecture}
\label{conj:main}
Suppose that (\ref{eq:sde}) has a noisy network attractor such that hypotheses (H1)-(H3) hold. We conjecture there is a $\tau>0$ such that for any desired mean residence times $R_j>\tau$ and any desired transition probabilities $\Pi_{ij}>0$ with $\sum_j\Pi_{ij}=1$, there exists a choice of noise amplitudes $\eta_{ij}$ such that 
$$
T_j=R_j,~~\pi_{i,j}=\Pi_{i,j}
$$
for all $i,j$.
\end{conjecture}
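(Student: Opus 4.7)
My plan is to reduce the conjecture to a single-node problem via the memoryless assumption, then exhibit an explicit recipe for the noise amplitudes using the asymptotic residence-time scalings of Section~\ref{sec:residence} combined with the competing-escape model of Section~\ref{sec:multiplesc}. The $\epsilon$-memoryless hypothesis (H3) means that, up to controllable error, the itinerary is generated by a Markov chain whose parameters $(T_i,\pi_{i,j})$ are determined by local behaviour near each node $\xi_i$. Hence it suffices to show that at each individual node the pair (mean residence time, outgoing transition distribution) can be set to any prescribed values $(R_i,\{\Pi_{i,j}\})$ by choosing the noise amplitudes $\{\eta_{ij}\}_{j:\,i\to j}$.

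For a fixed node $\xi_i$ with outgoing connections to $\{\xi_j : j\in \cN(i)\}$, model the residence time as the minimum of $|\cN(i)|$ competing first-escape times $X_{ij}$, one per outgoing direction, with $X_{ij}$ depending only on $\eta_{ij}$. Assume, as supported in Section~\ref{sec:multiplesc}, that for small noise the $X_{ij}$ are asymptotically independent and approximately exponential with rate $\lambda_{ij}(\eta_{ij})=1/\bbE[X_{ij}]$. Then the residence time $\tau_i=\min_j X_{ij}$ is exponential with rate $\Lambda_i=\sum_{j\in\cN(i)}\lambda_{ij}$, giving
\begin{equation*}
T_i \;=\; \frac{1}{\Lambda_i}, \qquad \pi_{i,j} \;=\; \frac{\lambda_{ij}}{\Lambda_i}.
\end{equation*}
Inverting, the target values $(R_i,\Pi_{i,j})$ are realised precisely when $\lambda_{ij}=\Pi_{i,j}/R_i$ for each $j\in\cN(i)$.

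To realise these rates, invoke the asymptotic scaling derived in Section~\ref{sec:residence}: $\bbE[X_{ij}]$ is a continuous, strictly monotone function of $\eta_{ij}$ that diverges as $\eta_{ij}\to 0^+$ (of order $-\ln\eta$, $1/\eta$, or $\exp(B/\eta^2)$ depending on whether the connection is heteroclinic, critical, or excitable). Continuity and divergence imply surjectivity onto $(\tau_0,\infty)$ for some $\tau_0$ determined by the deterministic timescales, so one can pick $\eta_{ij}>0$ achieving $\lambda_{ij}=\Pi_{i,j}/R_i$ provided $R_i/\Pi_{i,j}>\tau_0$ for every $j$; taking $\tau$ in the statement to be the supremum of $\tau_0/\min_j\Pi_{i,j}$ over all nodes suffices. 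By choosing $R_i$ even larger one forces the $\eta_{ij}$ to be arbitrarily small, and applying (H2)--(H3) then guarantees that the derived Markov model approximates the true itinerary statistics to within any prescribed $\epsilon$.

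The main obstacle is the assumed independence and approximate exponentiality of the competing escapes $X_{ij}$. This is nontrivial because different directions share the same local linearisation at $\xi_i$, the noise may be correlated across components through the geometry of the stable/unstable manifolds, and the distributions typically depart from pure exponential on timescales comparable to the deterministic passage time (as discussed in Section~\ref{sec:multiplesc}). Controlling this requires showing that, in the small-noise limit, the escape laws concentrate around exponential distributions with the Kramers' or logarithmic rates, and that cross-direction correlations vanish faster than the quantities $\lambda_{ij}$. A secondary subtlety is uniformity: all the approximations must hold simultaneously with a single $\eta$ controlling all nodes and connections, which is why the conjecture is stated with a uniform threshold $\tau$ on the residence times rather than node-by-node. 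A full proof would likely combine Bakhtin's invariant-measure results~\cite{Bakhtinb} with an explicit exit-distribution analysis at each node, perhaps using Freidlin--Wentzell large deviations in the excitable regime and linearised Ornstein--Uhlenbeck approximations in the heteroclinic regime.
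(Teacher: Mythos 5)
This statement is Conjecture~\ref{conj:main}; the paper does not prove it. It offers only numerical evidence (Section~\ref{sec:switching}, Figures~\ref{fig:example3bidir} and~\ref{fig:3plots}) together with the heuristic of multiple independent escape processes in Section~\ref{sec:multiplesc}. Your proposal follows essentially that same heuristic: decompose via memorylessness, model the residence time at a node as the minimum of independent per-direction escape times, assume exponential laws so that $T_i=1/\Lambda_i$ and $\pi_{i,j}=\lambda_{ij}/\Lambda_i$ (these are exactly~(\ref{eq:exptau}) and~(\ref{eq:expkappa})), and then invert using the monotone divergent scalings of Section~\ref{sec:residence}. So you have reconstructed the paper's motivating argument, not supplied a proof, and to your credit you say so: the "main obstacles" you list are precisely the steps that remain open.

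Two of those obstacles deserve sharper emphasis, because the paper itself gives reasons to believe they are not mere technicalities. First, the exponentiality assumption is known to \emph{fail} in the heteroclinic regime: residence times there are approximately log-normally distributed \cite{StoHol90}, and the paper states explicitly that in both regimes "the distributions will not be exponential" (only the excitable case has an exponential tail with the Kramers rate). Second, Lemma~\ref{lem:firstescape} and the bimodal example following it (where $E(t_1)=10.9>E(t_2)=2$ yet direction $1$ wins with probability greater than $0.9$) show that for non-exponential laws the transition probabilities are \emph{not} determined by the mean escape times at all. Your inversion recipe $\lambda_{ij}=\Pi_{i,j}/R_i$ therefore has no justification outside the exponential idealisation, and the surjectivity/intermediate-value step inherits this gap: without exponentiality you would need to control the full escape-time distributions (and their cross-correlations through the shared linearisation at $\xi_i$) to show that the map from $\{\eta_{ij}\}$ to $(T_i,\{\pi_{i,j}\})$ covers the claimed range. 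Until those distributional estimates are established --- the paper suggests Freidlin--Wentzell and Bakhtin-type analyses as the likely route --- the statement remains a conjecture, and your argument should be presented as a heuristic reduction rather than a proof.
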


We present some snumerical evidence supporting this in Section~\ref{sec:switching}.

\section{Residence times for noisy heteroclinic and excitable networks}
\label{sec:residence}

For the noisy network dynamics discussed in Section~\ref{sec:itineraries} we study the behaviour near the connections in terms of an escape process near an equilibrium. On entering a neighbourhood of $\xi_1$ the dynamics of those $y_i$ variables that correspond to outgoing directions in the graph will be unstable (for the heteroclinic case) or marginally stable (for the excitable case); we assume all others are strongly stable. Without loss of generality we consider $y=y_1$ corresponding to a connection from $\xi_1$ to $\xi_2$. The mean escape time from a neighbourhood of an equilibrium $\xi_1$ of a network will be approximated using a one dimension model of the bifurcation to an excitable connection. 

For the excitable case this is the well-studied Kramers escape rate from a local potential well. Although Kramers' result has been known and applied in many areas for a long time, only recently have full mathematical justifications of the asymptotic formulae been available \cite{Friedlinetal2012}, and generalisations to more complex situations including some bifurcation problems have only recently been developed by Bakhtin \cite{Bakhtin}, Berglund, Gentz \cite{Berglund2013,BerglundGentz2008} and others. A related case of escape over a potential maximum that undergoes a supercritical pitchfork bifurcation is analysed in detail by Berglund and Gentz~\cite{BerglundGentz2008}: we treat however the problem of escape from a saddle that becomes a sink at a subcritical pitchfork bifurcation on varying $\nu$. 

To this end, consider the one dimensional SDE
\begin{equation}
dx= -V'(x) dt + \eta dw.
\label{eq:sdeescape}
\end{equation}
Kramers' formula is an asymptotic formula for the mean transition time from one minimum, $x_0$, to another minimum, $y_0$, of $V(x)$ that causes the trajectory to pass over the maximum potential barrier $z_0$. It states that
\begin{equation}
T \approx \frac{2\pi}{\sqrt{V''(x_0)|V''(z_0)|}} \exp \left(2\frac{V(z_0)-V(x_0)}{\eta^2}\right)\left(1+O(\eta)\right)
\label{eq:Kramers}
\end{equation}
in the limit $\eta\rightarrow 0^+$ (see \cite{Berglund2013} for a review).

More precisely, we approximate the mean residence time near a saddle as the mean escape time $T(\nu,\eta)$ from $x=0$ for the one-dimensional problem (\ref{eq:sdeescape}) with potential 
\begin{equation}
V(x)=\frac{1}{6} x^6 -\frac{1}{2} x^4+\frac{\nu}{2}x^2
\label{eq:potential}
\end{equation}
from the interval $[-a,a]$ for some fixed $a$ of order one; more precisely we choose an $0<a$ that separates the additional potential wells of (\ref{eq:potential}) from $x=0$. Figure~\ref{fig:potls} illustrates the potential and the choice of $a$: we will be interested in cases where $\nu$ is close to zero so any additional equilibria lie within $[-a,a]$ and the noise amplitude is asymptotically small: $\eta\rightarrow 0^+$.

\begin{figure}%
\centerline{
\includegraphics[width=0.8\columnwidth]{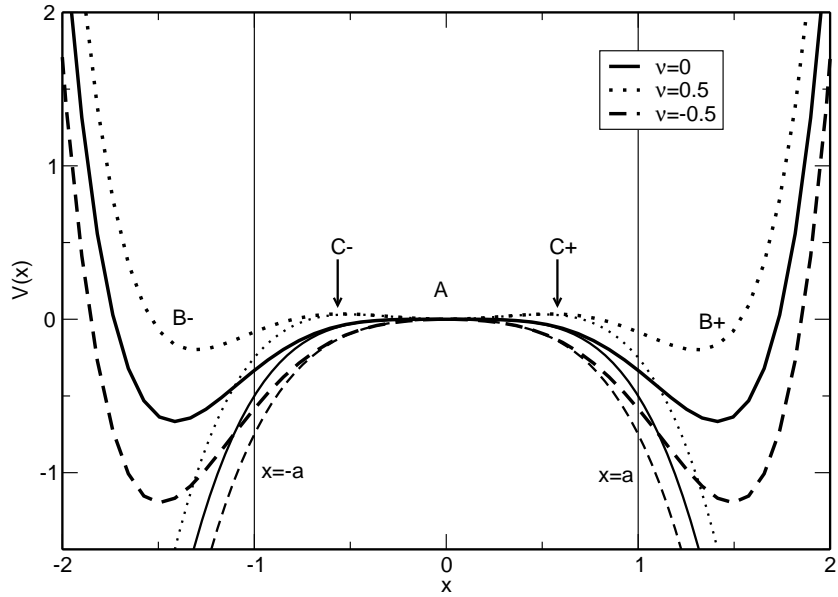}
}%
\caption{Bold lines show the potential $V(x)$ versus $x$ (\ref{eq:potential}) for values of $\nu<0$, $\nu=0$ and $\nu>0$. Note that there are minima at $B+$ and $B-$ for all $\nu$ close enough to zero, while $A$ is a local maximum for $\nu\leq 0$. For $\nu>0$ there are local maxima at $C+$ and $C-$ and $A$ is a local minimum. The fainter lines show the truncated potential (\ref{eq:mpotential}). We will model the transitions from $A$ to $B\pm$ in the full potential as the first passage through the lines $x=\pm a$ for the truncated potential; this gives good predictions for small enough $\sqrt{\nu}\ll a$ in the limit $\eta\rightarrow 0^+$.}%
\label{fig:potls}%
\end{figure}

We can consider the modified potential 
\begin{equation}
V(x)= \frac{\nu x^2-x^4}{2}.
\label{eq:mpotential}
\end{equation}
For $\nu<0$, this has a saddle at $x=0$ that is stabilised via a subcritical pitchfork on increasing $\nu$ through zero. For the case $\nu>0$, $V$ has a minimum at $x=0$ and maxima at $x_{\pm}=\pm \sqrt{\nu/2}$, and we assume $x_{\pm}\in[-a,a]$. Using Berglund~\cite[Section 3.1]{Berglund2013} we calculate the mean escape time, $w_{a}(x)$, for solutions of (\ref{eq:sdeescape}) starting at a location $x\in (-a,a)$ out of the interval $[-a,a]$. This is given by solving the Poisson problem
\begin{equation}
\frac{\eta^2}{2} \frac{d^2}{dx^2}w_a(x) - V'(x) \frac{d}{dx} w_a(x)=-1,~~ w_a(-a)=w_a(a)=0.
\label{eq:waitingtime}
\end{equation}
The solution of this can be expressed in integral form as
$$
w_a(s) = \frac{2}{\eta^2} \int_{x=s}^{a} \int_{y=0}^{x} \exp \frac{2(V(x)-V(y))}{\eta^2} \,dy\,dx
$$
For escapes with $x$ near the origin, i.e. $0<|x|\ll |a|$ of the potential (\ref{eq:potential}) this can be approximated by
\begin{equation}
T(\nu,\eta) =  \frac{2}{\eta^2} \int_{x=0}^{a} \int_{y=0}^{x} \exp \frac{\nu(x^2-y^2)+(y^4-x^4)}{\eta^2} \,dy\,dx.
\label{eq:escapes}
\end{equation}
In the following sections, we compute asymptotics of $T(\nu,\eta)$ for small $\eta$ and $\nu$. In particular, we consider the limit $\eta\rightarrow 0^+$ for three cases: $\nu<0$, $\nu=0$, and $\nu>0$. We begin by finding some bounds on $T(\nu,\eta)$.

\begin{lemma}
\begin{equation} \label{eq:Tbounds} 
 \frac{1}{2\eta} \int_{z=0}^{\frac{a^2}{\eta}} \frac{1-\exp(z(\alpha-z))}{z(z-\alpha)} \,dz < T(\nu,\eta) < \frac{1}{\eta} \int_{z=0}^{\frac{2a^2}{\eta}} \frac{1-\exp(2z(\alpha-z))}{2z(z-\alpha)} \,dz
\end{equation}
\end{lemma}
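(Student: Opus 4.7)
The plan is to prove both bounds by a change of variables to a symmetric, dimensionless coordinate system on the escape region, apply elementary pointwise inequalities to the Jacobian factor, and swap the order of integration to reduce the inner integrand to the closed form $(e^X-1)/X$.

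\textbf{Step 1 (normalise and symmetrise).} Set $p = x^2/\eta$, $q = y^2/\eta$, and $\alpha := \nu/\eta$, so that
$$T = \frac{1}{2\eta}\iint_{0 \le q \le p \le a^2/\eta} \frac{e^{(p-q)(\alpha - p - q)}}{\sqrt{pq}}\,dq\,dp.$$
Pass to sum/difference coordinates $\rho = p+q$, $\sigma = p-q$ (Jacobian $\tfrac12$, with $\sqrt{pq} = \tfrac12\sqrt{\rho^2-\sigma^2}$) to arrive at
$$T = \frac{1}{2\eta}\int_0^{a^2/\eta}\int_\sigma^{2a^2/\eta - \sigma} \frac{e^{\sigma(\alpha-\rho)}}{\sqrt{\rho^2-\sigma^2}}\,d\rho\,d\sigma. \quad (\star)$$

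\textbf{Step 2 (lower bound).} Since $\sigma \ge 0$ we have $\sqrt{\rho^2-\sigma^2} \le \rho$, hence $1/\sqrt{\rho^2-\sigma^2} \ge 1/\rho$. Swap the order of integration in $(\star)$: the domain $\{0 \le \sigma \le \rho \le 2a^2/\eta - \sigma\}$ contains $\{0 \le \rho \le a^2/\eta,\ 0 \le \sigma \le \rho\}$, and the integrand is non-negative, so restricting to this subdomain only gives a lower bound. The inner $\sigma$-integral is then $\int_0^\rho e^{\sigma(\alpha-\rho)}\,d\sigma = (e^{\rho(\alpha-\rho)}-1)/(\alpha-\rho)$, which after relabelling $\rho \to z$ yields exactly the claimed lower bound.

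\textbf{Step 3 (upper bound).} I would use the complementary pointwise lower bound $\sqrt{\rho^2-\sigma^2} \ge \sqrt{2\sigma(\rho-\sigma)}$, which follows from $\rho + \sigma \ge 2\sigma$. After substituting $u = \rho - \sigma$ in $(\star)$ and then $t = \sigma u$, the inner integral becomes
$$\int_0^{2\sigma(a^2/\eta - \sigma)} \frac{e^{-t}}{\sqrt{2\sigma\cdot t/\sigma}}\,\frac{dt}{\sigma} \,\cdot\, \text{(prefactor)},$$
which can be recast, together with the outer factor $e^{\sigma(\alpha-\sigma)}$, via the rescaling $z = 2\sigma$ so that $\sigma(\alpha-\sigma) = \tfrac{z}{2}(\alpha-\tfrac{z}{2})$ becomes $\tfrac14 \cdot 2z(\alpha-z)$ after a further tightening. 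Swapping the order of integration and using non-negativity to extend the $z$-range up to $2a^2/\eta$ will produce the claimed upper bound form $\frac{1}{\eta}\int_0^{2a^2/\eta}\frac{e^{2z(\alpha-z)}-1}{2z(\alpha-z)}dz$.

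\textbf{Anticipated main obstacle.} The lower-bound derivation is clean because the inequality $\sqrt{\rho^2-\sigma^2} \le \rho$ is tight precisely where $\sigma = 0$ (which is where the exponent vanishes), so no information is lost near the saddle. The upper bound is harder: naive lower bounds on $\sqrt{\rho^2-\sigma^2}$ either diverge on the diagonal $\sigma = \rho$ (e.g.~$\sqrt{\rho^2-\sigma^2} \ge \rho - \sigma$) or give an intermediate formula not obviously of the target form. Producing both the factor of $2$ in the exponent $2z(\alpha-z)$ and the doubled range $[0,2a^2/\eta]$ simultaneously requires a coordinated rescaling at the very last step, and verifying the inequality really does point in the claimed direction when $z(\alpha-z) < 0$ (i.e., outside the ``active'' region of the saddle) is the delicate point I would need to check carefully.
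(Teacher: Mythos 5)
Your Step 1 is a valid reparametrisation of the paper's starting point: your $(\rho,\sigma)$ are $u^2+v^2$ and $u^2-v^2$ in the rescaled variables $u=x/\sqrt{\eta}$, $v=y/\sqrt{\eta}$, so your exponent $\sigma(\alpha-\rho)$ coincides with the paper's $pq\bigl(\alpha-(p^2+q^2)/2\bigr)$ under $p=u+v$, $q=u-v$. Your Step 2 is correct and complete, and it is a genuinely different route from the paper's: you keep the exponent exact and bound the singular Jacobian factor via $1/\sqrt{\rho^2-\sigma^2}\geq 1/\rho$, then shrink the domain to $\{0\leq\sigma\leq\rho\leq a^2/\eta\}$ and integrate exactly in $\sigma$; the paper instead works with the flat measure $dq\,dp$ and bounds the \emph{exponent} using $p^2<p^2+q^2<2p^2$ on $0<q<p$. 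Both give exactly the stated lower bound; yours is arguably the cleaner of the two for that half.

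Step 3 has a genuine gap, and it is not merely the bookkeeping issue you flag at the end. The inequality $\sqrt{\rho^2-\sigma^2}=\sqrt{(\rho+\sigma)(\rho-\sigma)}\geq\sqrt{2\sigma(\rho-\sigma)}$ is extremely lossy near $\sigma=0$ (where $\rho+\sigma\approx\rho\gg 2\sigma$), which is precisely where the integral concentrates. Concretely, at $\alpha=0$ your substitutions turn the proposed bound into
\begin{equation*}
\frac{1}{2\eta}\int_0^{a^2/\eta} e^{-\sigma^2}\,\frac{1}{\sigma\sqrt{2}}\int_0^{2\sigma(a^2/\eta-\sigma)}\frac{e^{-t}}{\sqrt{t}}\,dt\,d\sigma ,
\end{equation*}
and for $\eta/(2a^2)\leq\sigma\leq 1$ the inner integral is bounded below by the constant $\int_0^{1/2}e^{-t}t^{-1/2}dt$, so the whole expression is bounded below by $c\,\eta^{-1}\ln(1/\eta)$. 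The target upper bound at $\alpha=0$ is $\sqrt{\pi/2}\,/\eta$, which is asymptotically smaller; your chain therefore proves $T\leq S$ for an $S$ that exceeds the claimed bound, and so cannot establish the lemma (indeed it would destroy the $T\sim C(\alpha)/\eta$ scaling at bifurcation that the lemma is used for). The repair is to treat the upper bound the way the paper does: in the linear coordinates $p=u+v$, $q=u-v$ one has $T=\frac{1}{\eta}\iint\exp\bigl[pq(\alpha-(p^2+q^2)/2)\bigr]\,dq\,dp$ with no singular prefactor; since $0<q<p$ and $pq>0$, the exponent is bounded above by $pq(\alpha-p^2/2)$, one enlarges the $q$-range to $[0,p]$, integrates exactly in $q$, and substitutes $z=p^2/2$ to land on the claimed form with upper limit $2a^2/\eta$. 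Your worry about the sign of $z(\alpha-z)$ is then moot, because the comparison is made on the exponent itself and multiplied by the positive factor $pq$.
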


\begin{proof}
Rescale $v:=y/\sqrt{\eta}$, $u=x/\sqrt{\eta}$, 
%so that $dy\,dx=\eta \,dv\,du$ 
and define
$
\alpha:=\frac{\nu}{\eta}
$
%and note that
%$$
%\frac{\nu}{\eta^2}(x^2-y^2)+\frac{1}{\eta^2}(y^4-x^4) = \alpha (u^2-v^2)+(v^4-v^4),
%$$ 
to get
\begin{equation}
T(\nu,\eta) =  \frac{2}{\eta} \int_{u=0}^{\frac{a}{\sqrt{\eta}}} \int_{v=0}^{u} \exp \left[\alpha (u^2-v^2)+(v^4-u^4)\right] \,dv\,du.
\end{equation}
Now let us define
$$
p:=u+v,~~q:=u-v.
$$
%so that 
%$$
%\alpha (u^2-v^2)+(v^4-u^4) = pq\left(\alpha - \frac{1}{2}(p^2+q^2)\right)
%$$
%and 
%$$
%\left|\frac{\partial(p,q)}{\partial(u,v)}\right|=2.
%$$
Changing integration variables from $(u,v)$ to $(p,q)$ we have 
\begin{equation}
T(\nu,\eta) =  \frac{1}{\eta} \int_{p=0}^{\frac{2a}{\sqrt{\eta}}} \int_{q=0}^{\min(p,\frac{2a}{\sqrt{\eta}}-p)} \exp \left[pq(\alpha -(p^2+q^2)/2)\right] \,dq\,dp.
\label{eq:ovTpq}
\end{equation}
In the region of integration we have $0<q<p$, so  $p^2<p^2+q^2<2p^2$ and so
$$
pq(\alpha-p^2)<pq(\alpha -(p^2+q^2))<pq(\alpha- p^2/2)
$$
We can thus find an upper bound to (\ref{eq:ovTpq}) using
\begin{eqnarray*}
T(\nu,\eta) & < & 
\frac{1}{\eta} \int_{p=0}^{\frac{2a}{\sqrt{\eta}}} \int_{q=0}^{\min(p,\frac{2a}{\sqrt{\eta}}-p)} \exp \left[q p (\alpha  - p^2/2)\right] \,dq\,dp\\
&<& \frac{1}{\eta} \int_{p=0}^{\frac{2a}{\sqrt{\eta}}} \int_{q=0}^{p} \exp \left[q p (\alpha  - p^2/2)\right] \,dq\,dp\\
&=&
\frac{1}{\eta} \int_{p=0}^{\frac{2a}{\sqrt{\eta}}} \frac{1-\exp [p^2(\alpha-p^2/2)]}{p(p^2/2-\alpha)} \,dp
\end{eqnarray*}
Changing coordinates to $z=p^2/2$, gives the upper bound.

A lower bound to (\ref{eq:ovTpq}) is given by
\begin{eqnarray*}
T(\nu,\eta) & > & 
\frac{1}{\eta} \int_{p=0}^{\frac{a}{\sqrt{\eta}}} \int_{q=0}^{\min(p,\frac{2a}{\sqrt{\eta}}-p)} \exp \left[q p (\alpha  - p^2)\right] \,dq\,dp\\
&>& \frac{1}{\eta} \int_{p=0}^{\frac{a}{\sqrt{\eta}}} \int_{q=0}^{p} \exp \left[q p (\alpha  - p^2)\right] \,dq\,dp\\
&=&
\frac{1}{\eta} \int_{p=0}^{\frac{a}{\sqrt{\eta}}} \frac{1-\exp [p^2(\alpha-p^2)]}{p(p^2-\alpha)} \,dp
\end{eqnarray*}

Changing coordinates to $z=p^2$ gives the lower bound. \hfill$\square$
\end{proof}

\subsection{Scaling for heteroclinic connections}

Heteroclinic connections in a network correspond to $\nu<0$. In this parameter regime the scaling for $T(\nu,\eta)$ as $\eta$ tends to zero is given as follows:
\begin{lemma}
Suppose $\nu<0$. Pick some $0<\beta<1$. Then in the limit $\eta\rightarrow 0^+$, 
\begin{equation}
\beta<\frac{T(\nu,\eta)}{\frac{1}{\nu}\ln \eta} <1.
\label{eq:nultzero}
\end{equation}
\end{lemma}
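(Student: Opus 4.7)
The plan is to sandwich $T(\nu,\eta)$ using the two bounds of the previous lemma and show that each, when divided by $\tfrac{1}{\nu}\ln\eta$, tends to $1$ as $\eta\to 0^+$. Since $\nu<0$ and $\eta$ is small, $\alpha=\nu/\eta$ is large and negative with $|\alpha|=|\nu|/\eta\to\infty$. I would rescale via $u:=z/|\alpha|$, which cleanly normalises the quadratic form: $z(\alpha-z)=-|\alpha|^{2}u(1+u)$ and $z(z-\alpha)=|\alpha|^{2}u(1+u)$. Both bounds then take the common form
$$\frac{1}{2|\nu|}J(K,M),\qquad J(K,M):=\int_{0}^{M}\frac{1-e^{-Ku(1+u)}}{u(1+u)}\,du,$$
with $(K,M)=(|\alpha|^{2},\,a^{2}/|\nu|)$ for the lower bound and $(K,M)=(2|\alpha|^{2},\,2a^{2}/|\nu|)$ for the upper bound. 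In both cases $K\to\infty$ as $\eta\to 0^+$ while $M$ stays fixed (for fixed $\nu$ and $a$).

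The heart of the proof is the asymptotic $J(K,M)=\ln K+O_{M}(1)$. Using partial fractions $\tfrac{1}{u(1+u)}=\tfrac{1}{u}-\tfrac{1}{1+u}$ isolates the logarithmic piece: the $\tfrac{1}{1+u}$ contribution is bounded by $\ln(1+M)=O_M(1)$. For the $\tfrac{1}{u}$ piece, the substitution $v=Ku$ converts the integral into $\int_{0}^{MK}\tfrac{1-e^{-v(1+v/K)}}{v}\,dv$, which I would split at $v=\sqrt{K}$. On $[0,\sqrt{K}]$ the correction $v^{2}/K\le 1$ is uniformly bounded, so the integrand equals $(1-e^{-v})/v$ up to an $O(1/K)$ error, contributing $\tfrac{1}{2}\ln K+O(1)$ by the standard exponential-integral expansion. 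On $[\sqrt{K},MK]$ the factor $e^{-v(1+v/K)}\le e^{-\sqrt{K}}$ is superexponentially small, so the integrand is effectively $1/v$, contributing $\tfrac{1}{2}\ln K+\ln M+o(1)$. Summing yields $\ln K+O_M(1)$.

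Substituting back, since $\ln K = 2\ln|\alpha|+O(1) = -2\ln\eta+2\ln|\nu|+O(1)$, both bounds equal
$$\frac{-\ln\eta+\ln|\nu|}{|\nu|}+O(|\nu|^{-1})=\left(1+\frac{O(1)}{-\ln\eta}\right)\cdot\frac{1}{\nu}\ln\eta,$$
where I have used $\tfrac{1}{\nu}\ln\eta=\tfrac{-\ln\eta}{|\nu|}>0$. Dividing by $\tfrac{1}{\nu}\ln\eta$ gives ratio $1+O(1/|\ln\eta|)$ for each bound, which converges to $1$ as $\eta\to 0^{+}$. Hence for any fixed $\beta\in(0,1)$ the lower-bound ratio eventually exceeds $\beta$ and the upper-bound ratio converges to $1$, yielding the sandwich asserted by the lemma.

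The main obstacle is controlling $J(K,M)$ across the transition scale $u\sim 1/K$, where the integrand crosses over from being approximately the constant $K$ (when $Ku(1+u)\ll 1$, so $1-e^{-Ku(1+u)}\approx Ku(1+u)$) to being approximately $1/(u(1+u))$ (when the exponential is negligible); a naive bound on either side loses the leading $\ln K$. A secondary subtlety is that the ``$<1$'' in the statement is genuinely asymptotic: the upper-bound ratio approaches $1$ with an $O(1/|\ln\eta|)$ correction whose sign depends on the constants $\nu$ and $a$, so the strict inequality is meant in the $\limsup$ sense rather than pointwise for every small $\eta$.
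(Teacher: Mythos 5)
Your proof is correct, and it takes a genuinely different---and in fact sharper---route than the paper's. The paper also starts from the two bounds of the preceding lemma, but it estimates the integrands pointwise: for the upper bound it uses $f(z)<1$ near the origin and $f(z)<\tfrac{1}{2z(z-\alpha)}$ beyond a split point $z^*=1/|\alpha|$, obtaining $T<\tfrac{1}{\nu}\ln\eta+K_1+O(\eta^2)$; for the lower bound it uses $1-e^{z(\alpha-z)}>\beta$ only beyond a $\beta$-dependent split point $z^*(\beta)=-\ln(1-\beta)/|\alpha|$, which deliberately sacrifices a factor $\beta$ of the logarithm and yields only $T>\tfrac{\beta}{\nu}\ln\eta+K_2(\beta)+O(\eta^2)$, so its lower-bound ratio tends to $\beta$, not to $1$. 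Your rescaling $u=z/|\alpha|$ instead collapses both bounds to the single integral $J(K,M)$ and extracts the full asymptotic $\ln K+O_M(1)$ via partial fractions and the exponential-integral expansion (your handling of the crossover scale $u\sim 1/K$ by splitting at $v=\sqrt{K}$ is the right way to avoid losing the leading logarithm). This shows both bounds equal $\tfrac{1}{\nu}\ln\eta+O(1)$, hence the ratio converges to $1$, which is strictly stronger than the lemma as stated. What the paper's cruder estimates buy in exchange is explicit constants: $K_1$ is reused in Section~\ref{sec:onedimsims} to compare against numerically fitted intercepts, whereas your $O(1)$ terms are left implicit. Your closing caveat that the strict inequality ``$<1$'' only holds asymptotically (or for small enough $|\nu|$, where the relevant constant is negative) is accurate and applies equally to the paper's own bound, since the sign of $K_1$ depends on $\nu$ and $a$.
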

Observe that the leading order of this scaling is as expected from Stone and Holmes~\cite{StoHol90}.

~

\begin{proof}
We begin by computing the upper bound. 
For $\nu<0$ and $\eta>0$ (so that $\alpha=\nu/\eta<0$) note that the integrand in the upper bound in (\ref{eq:Tbounds}), $f(z)=\frac{1-\exp(2z(\alpha-z))}{2z(z-\alpha)}$, satisfies both
\[ f(z)<\frac{1}{2z(z-\alpha)}\quad \forall z>0 \]
and
\[
f(z)<1\quad  \forall z>0.
\]
This implies that for some $z^*>0$, we can split the integral into
\begin{eqnarray}
T(\nu,\eta) &<& \frac{1}{\eta} \left[ \int_{z=0}^{z^*} \,dz + \int_{z^*}^\frac{2a^2}{\eta} \frac{1}{2z(z-\alpha)} \,dz\right] \nonumber \\
%&=&  \frac{z^{*}}{\eta} +\frac{1}{2\eta}\left[ \frac{\ln (z-\alpha)}{\alpha} -\frac{\ln z}{\alpha}\right]_{z^*}^{\frac{2a^2}{\eta}}\nonumber\\
%&=&  \frac{z^{*}}{\eta} +\frac{1}{2\eta\alpha}\left[ \ln \left| 1- \frac{\alpha}{z}  \right| \right]_{z^*}^{\frac{2a^2}{\eta}}\nonumber\\
%&=&  \frac{z^{*}}{\eta} +\frac{1}{2\nu}\left[ \ln \left(1-\frac{\alpha\eta}{2a^2}\right)-\ln \left(1-\frac{\alpha}{z^*}\right) \right]\nonumber\\
&=&  \frac{z^{*}}{\eta} +\frac{1}{2\nu}\left[-\ln \frac{|\alpha|}{z^*}   -\ln \left(1+\frac{z^*}{|\alpha|}\right) + \ln \left(1-\frac{\nu}{2a^2}\right) \right]. \label{eq:upperbd2}
\end{eqnarray}
We then choose $z^*=\dfrac{1}{|\alpha|}=\dfrac{\eta}{|\nu|}=-\dfrac{\eta}{\nu}$ and letting $\eta\rightarrow 0^+$, we find
\begin{eqnarray*}
T(\nu,\eta) & < & -\frac{1}{\nu} -\frac{1}{2\nu}\ln \frac{|\nu|^2}{\eta^2}-\frac{1}{2\nu}\ln\left(1+\frac{\eta^2}{|\nu|^2} \right)+\frac{1}{2\nu}\ln\left(1-\frac{\nu}{2a^2} \right) \\
& < & \frac{1}{\nu}\ln\eta +\frac{1}{\nu}\left(-1-\ln|\nu|+\ln\left(  1-\frac{\nu}{2a^2}\right) \right) +O\left(\frac{\eta^2}{\nu^3} \right)
\end{eqnarray*}
so that in the limit $\eta\rightarrow 0^+$ for fixed $\nu<0$ and $a>0$,
\begin{equation}
T(\nu,\eta)< \frac{1}{\nu}\ln \eta +  K_1 + O(\eta^2)
\label{eq:nultzeroupper}
\end{equation}
 where
$$
K_1= \frac{1}{\nu}\left(-1-\ln|\nu|+\ln\left(  1-\frac{\nu}{2a^2}\right) \right). 
$$
%Observe that the leading order of this scaling is as expected from \cite{StoHol90}; moreover the constant correction $K_1$ for sufficiently small $|\nu|$ is such that $K_1\rightarrow -\infty$ as $\nu\rightarrow 0-$.

%\subsubsection{Lower bound}

We now obtain a lower bound. Let the integrand in the lower bound in (\ref{eq:Tbounds}) be $g(z)=\frac{1-\exp(z(\alpha-z))}{z(z-\alpha)}$ and fix some $0<\beta<1$. It can be shown that for $z^*(\beta)=-\frac{\ln(1-\beta)}{|\alpha|}$, the integrand satisfies
\[
g(z)>\frac{\beta}{z(z-\alpha)}\quad \mathrm{for}\ z>z^*(\beta)
\]
and
\[
g(z)>\frac{\beta}{-2\ln(1-\beta)} \quad \mathrm{for}\ 0<z<z^*(\beta)\quad \mathrm{and}\ |\alpha|>z^*(\beta)
\]

We can thus, for fixed $\beta$ and large enough $|\alpha|$, split the integral into
\begin{eqnarray*}
T(\nu,\eta) &>& \frac{1}{2\eta} \left[ \int_{z=0}^{z^*} \frac{\beta}{-2\ln(1-\beta)} \,dz +\beta \int_{z^*}^\frac{a^2}{\eta} \frac{1}{z(z-\alpha)} \,dz.\right]\\
&=&  \frac{\beta}{-4\ln (1-\beta) }\frac{z^{*}}{\eta} +\frac{\beta}{2\nu}\left[ -\ln \frac{|\alpha|}{z^*}   -\ln \left(1+\frac{z^*}{|\alpha|}\right) + \ln \left(1-\frac{\nu}{a^2}\right) \right].
\end{eqnarray*}
%(following the same calculations for the second integral as those in equation~\eqref{eq:upperbd2})).
Then, substituting for $z^*(\beta)=-\frac{\ln(1-\beta)}{|\alpha|}=-\ln(1-\beta) \frac{\eta}{|\nu|}$, we find
\begin{eqnarray*}
T(\nu,\eta) & > & -\frac{\beta}{4\nu} -\frac{\beta}{2\nu}\ln\left( \frac{|\nu|^2}{-\ln(1-\beta)\eta^2}\right)-\frac{\beta}{2\nu}\ln\left(1-\ln(1-\beta)\frac{\eta^2}{|\nu|^2} \right)+\frac{\beta}{2\nu}\ln\left(1-\frac{\nu}{a^2} \right) \\
& = & \frac{\beta}{\nu}\ln\eta +\frac{\beta}{\nu}\left(-\frac{1}{4}-\ln|\nu|+\ln(-\ln(1-\beta))+\frac{1}{2}\ln\left(  1-\frac{\nu}{a^2}\right) \right) +O\left(\frac{\eta^2}{\nu^3} \right)
\end{eqnarray*}
so that in the limit $\eta\rightarrow 0^+$, for fixed $\nu<0$, $0<\beta<1$ and $a>0$, 
\begin{equation}
T(\nu,\eta)> \frac{\beta}{\nu}\ln \eta +  K_2(\beta) + O(\eta^2)
\label{eq:nultzerolower}
\end{equation}
where
\[
K_2(\beta)=\frac{\beta}{\nu}\left(-\frac{1}{4}-\ln|\nu|+\ln(-\ln(1-\beta))+\frac{1}{2}\ln\left(  1-\frac{\nu}{a^2}\right) \right) 
\]
\hfill$\square$
\end{proof}

\subsection{Scaling at bifurcation}

For the case $\nu=0$ where there is a bifurcation of the equilibrium at $x=0$ we obtain quite a different scaling. More precisely, 
\begin{lemma}
Suppose $\nu=0$ and pick any $0<\beta<1$. Then in the limit $\eta\rightarrow 0^+$, 
\begin{equation}
\frac{\sqrt{\pi}}{2}<\frac{T(\nu,\eta)}{\frac{1}{\eta}} <\sqrt{\frac{\pi}{2}}.
\label{eq:nultzero}
\end{equation}
\end{lemma}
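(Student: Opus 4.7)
The plan is to specialise the bounds already obtained in the preceding Lemma (equation~\eqref{eq:Tbounds}) to the case $\nu=0$, which makes $\alpha=\nu/\eta=0$, so the exponential factors simplify to Gaussians. Concretely the lower and upper integrands become $\frac{1-e^{-z^2}}{z^2}$ and $\frac{1-e^{-2z^2}}{2z^2}$ respectively. Pulling out the prefactor $1/\eta$ and multiplying through gives
\begin{equation*}
\frac{1}{2}\int_{0}^{a^2/\eta}\frac{1-e^{-z^2}}{z^2}\,dz \;<\; \eta\, T(0,\eta) \;<\; \int_{0}^{2a^2/\eta}\frac{1-e^{-2z^2}}{2z^2}\,dz.
\end{equation*}
As $\eta\to 0^+$ both upper limits of integration tend to $\infty$, so the task reduces to evaluating the two improper integrals on $(0,\infty)$.

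Next I would evaluate $\int_{0}^{\infty}\frac{1-e^{-z^2}}{z^2}\,dz$ by integration by parts with $u=1-e^{-z^2}$ and $dv=z^{-2}\,dz$. The boundary terms vanish (at $0$ because $1-e^{-z^2}\sim z^2$, at $\infty$ because $1/z\to 0$) and the remaining integral is $2\int_{0}^{\infty}e^{-z^2}\,dz=\sqrt{\pi}$. For the upper bound integral I would make the substitution $w=\sqrt{2}\,z$, which transforms $\int_{0}^{\infty}\frac{1-e^{-2z^2}}{2z^2}\,dz$ into $\frac{1}{\sqrt{2}}\int_{0}^{\infty}\frac{1-e^{-w^2}}{w^2}\,dw = \sqrt{\pi/2}$. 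Substituting the two limiting values into the displayed inequalities yields $\tfrac{\sqrt{\pi}}{2}<\eta\,T(0,\eta)<\sqrt{\pi/2}$ in the limit $\eta\to 0^+$, which is the claim.

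The only point that needs minor care is the direction of the monotone convergence of the truncated integrals: since $(1-e^{-z^2})/z^2\geq 0$, the lower-bound integral is monotonically increasing in its upper limit and converges upward to $\sqrt{\pi}$, so for every $\eta>0$ the lower bound on $\eta T(0,\eta)$ is strictly below $\sqrt{\pi}/2$ but approaches it, giving $\liminf_{\eta\to 0^+}\eta T(0,\eta)\geq \sqrt{\pi}/2$; the analogous argument gives $\limsup_{\eta\to 0^+}\eta T(0,\eta)\leq \sqrt{\pi/2}$. I do not expect any real obstacle here: the algebra of the earlier bound has already done the hard work of converting the double integral into a single one, and at $\nu=0$ the resulting integrands have explicit closed-form evaluations via integration by parts and the Gaussian integral. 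The parameter $\beta$ is only needed for the proof of the lower bound in the heteroclinic case and plays no role here, so the statement in the lemma can be regarded as uniform in $\beta$.
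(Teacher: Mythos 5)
Your proposal is correct and follows essentially the same route as the paper: specialise the bounds of the preceding lemma to $\alpha=0$, integrate by parts, and evaluate via the Gaussian integral, with the substitution $w=\sqrt{2}\,z$ handling the upper bound. The only (immaterial) difference is that the paper keeps the finite upper limits and tracks the $O(1)$ and exponentially small correction terms through the complementary error function, whereas you pass directly to the improper integrals and phrase the conclusion as $\liminf$/$\limsup$ bounds; both yield the stated asymptotic inequality, and you are right that $\beta$ plays no role in this case.
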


\begin{proof}
Set $\nu=0$ (so that $\alpha=0$), then the estimate (\ref{eq:Tbounds}) gives an upper bound
\begin{eqnarray*}
T(\nu,\eta) &<& \frac{1}{\eta} \int_{z=0}^{\frac{2a^2}{\eta}} \frac{1-\exp(-2z^2)}{2z^2} \,dz\\
& = & \frac{1}{\eta} \left[\frac{1-\exp(-2z^2)}{-2z} \right]_0^{\frac{2a^2}{\eta}}+\frac{2}{\eta}\int_0^{\frac{2a^2}{\eta}}\exp(-2z^2) \, dz \\
& = & -\frac{1}{4a^2}+\frac{\exp(-8a^4/\eta^2)}{4a^2}+\frac{1}{\eta}\sqrt{\frac{\pi}{2}}\left(1-\mathrm{erfc}\left(\frac{2\sqrt{2}a^2}{\eta} \right) \right) 
\end{eqnarray*}
where $\mathrm{erfc}$ is the complementary error function. Using the asymptotic expansion for $\mathrm{erfc}$ for large $X$ given by
\[
\mathrm{erfc}(X)=\frac{\exp(-X^2)}{X\sqrt{\pi}}\left(1-\frac{1}{2X^2} + \dots \right)
\]
we find the lowest order terms for $T(\nu,\eta)$ are
\begin{eqnarray*}
T(\nu,\eta) &<& \frac{1}{\eta} \sqrt{\frac{\pi}{2}}-\frac{1}{4a^2}+\frac{\eta^2\exp(-8 a^{4}/\eta^{2})}{64a^6}+O\left(\eta^4\exp(-8a^{4}/\eta^{2})\right)
\end{eqnarray*}
as $\eta \rightarrow 0^+$. 
%Note that the leading order is simply $K_2/\eta$ where
%$$
%K_2 =  \int_{z=0}^{\infty} \frac{1-\exp(-2z^2)}{2z^2} \,dz = \sqrt{\frac{\pi}{2}} = 1.253314
%$$
A similar computation for the lower bound gives
\begin{eqnarray*}
T(\nu,\eta) &>& \frac{1}{2\eta} \int_{y=0}^{\frac{a^2}{\eta}} \frac{1-\exp(-y^2)}{y^2} \,dy\\
& = & \frac{1}{2\eta} \left[\frac{1-\exp(-y^2)}{-y} \right]_0^{\frac{a^2}{\eta}}+\frac{1}{\eta}\int_0^{\frac{a^2}{\eta}}\exp(-y^2) \, dy \\
& = & -\frac{1}{2a^2}+\frac{\exp(-a^4/\eta^2)}{2a^2}+\frac{1}{\eta}{\frac{\sqrt{\pi}}{2}}\left(1-\mathrm{erfc}\left(\frac{a^2}{\eta} \right) \right) \\
& = & \frac{1}{\eta}{\frac{\sqrt{\pi}}{2}}-\frac{1}{2a^2}+\frac{\eta^2\exp(-a^{4}/\eta^{2})}{4a^6}+O(\eta^4 \exp(-a^4/\eta^2)).
\end{eqnarray*} 
\hfill$\square$
\end{proof}

The estimate (\ref{eq:Tbounds}) also means that we have a particularly tractable scaling if we look at the limit on fixing $\alpha$ (so that $\nu=\alpha \eta$) and taking $\eta\rightarrow 0^+$:
\begin{equation}
\label{eq:upperbdalpha}
T(\nu,\eta) < \frac{C(\alpha)}{\eta}+O(1)
\end{equation}
where
$$
C(\alpha)=\int_{z=0}^{\infty} \frac{1-\exp(2z(\alpha-z))}{2z(z-\alpha)} \,dz.
$$
is a constant that is small for $\alpha<0$ and grows very quickly for $\alpha>0$.
More generally this suggests that
\begin{equation}
\label{eq:scalingalpha}
T(\nu,\eta) \approx \frac{C(\alpha)}{\eta}+O(1)
\end{equation}
for some $C(\alpha)>0$ with $C(\alpha)\rightarrow 0$ as $\alpha\rightarrow -\infty$ and $C(\alpha)\rightarrow \infty$ as $\alpha\rightarrow \infty$. We believe the upper bounds is closer than the lower bounds, i.e. numerical evidence (see Figure~\ref{fig:escapes}) suggests that
$$
C(0)\leq \sqrt{\frac{\pi}{2}}=1.253314.
$$

\subsection{Scaling for excitable connections}

For $0<\nu<2a^2$ and $\eta>0$ (so that $\alpha>0$), if $\eta\rightarrow 0^+$ then we are in the standard Kramers case. We can compute this directly from~\eqref{eq:escapes}, that is
\begin{align*}
T(\nu,\eta) =  & \frac{2}{\eta^2} \int_{x=0}^{a} \int_{y=0}^{x} \exp \frac{\nu(x^2-y^2)+(y^4-x^4)}{\eta^2} \,dy\,dx \\
 = & \frac{2}{\eta^2} \int_{x=0}^{a} \exp \frac{\nu x^2-x^4}{\eta^2} \int_{y=0}^{x} \exp \frac{- \nu y^2+ y^4}{\eta^2} \,dy\,dx
\end{align*}
We note that the integrand of the first integral is maximal at $0<\sqrt{\nu/2}<a$, and that of the second at $0$. 
We approximate the significant contribution to the second integral over the range $0<y<\sqrt{\nu}$ and write $\mathrm{erf}(x)=\frac{2}{\sqrt{\pi}} \int_{s=0}^{x} \exp(-s^2)\,ds$ so that
\begin{align}
T(\nu,\eta) \approx & \frac{2}{\eta^2} \int_{x=-\infty}^{\infty} \exp \left( \frac{\nu^2}{4\eta^2} -\frac{2\nu}{\eta^2}\left(x-\sqrt{\frac{\nu}{2}}\right)^2\right) \, dx \int_{y=0}^{\sqrt{\nu}} \exp \frac{-\nu y^2}{\eta^2}  \,dy \nonumber \\
\approx& \frac{2}{\eta^2}  \exp \left( \frac{\nu^2}{4\eta^2} \right) \sqrt{\frac{\pi \eta^2}{2\nu}} \frac{\sqrt{\pi}}{2} \sqrt{\frac{\eta^2}{\nu}}\mathrm{erf}\left(\frac{\nu}{\eta}\right) \nonumber \\
\approx & \frac{\pi}{\nu\sqrt{2}}  \exp \left( \frac{\nu^2}{4\eta^2} \right) \label{eq:Kramersnuplus}
\end{align}
for fixed $\nu>0$ and $\eta\rightarrow 0^+$, which corresponds to the formula (\ref{eq:Kramers}).

In fact, an approximation that is valid over a larger range of $\eta$ can be found as follows, using an explicit lower bound. We write $(x,y)=r(\cos \theta,\sin \theta)$, and $s=r^2$. Then, we assume that $\nu<2a^2$, use the fact that $\exp(a)\leq \exp(b)\leq 1$ if $a\leq b\leq 0$, and the inequalities
%$$
%-2\theta^2\leq \cos 2\theta-1 \leq -\frac{16\theta^2}{\pi^2}\leq 0,~~0\leq 1-\frac{4\theta}{\pi} \leq \cos 2\theta\leq 1
%$$
$$
-2\theta^2\leq \cos 2\theta-1,\quad \cos 2\theta\leq 1
$$
on $\theta\in[0,\pi/4]$ to show that 
\begin{align}
T(\nu,\eta) = & \frac{2}{\eta^2} \int_{x=0}^{a}\int_{y=0}^{x} \exp \left( \frac{\nu(x^2-y^2)+(y^4-x^4)}{\eta^2} \right)  \,dy\,dx \nonumber \\
%> & \frac{2}{\eta^2} \int_{r=0}^{a}\int_{\theta=0}^{\pi/4} \exp \left( \frac{r^2(\nu-r^2)\cos 2\theta}{\eta^2}\right) \,r\,dr\,d\theta\nonumber\\
> & \frac{1}{\eta^2}\int_{s=0}^{a^2}\int_{\theta=0}^{\pi/4} \exp \left( \frac{s(\nu-s)\cos 2\theta}{\eta^2}\right) \,\,ds\,d\theta\nonumber
\\
= & \frac{1}{\eta^2} \int_{s=0}^{a^2}\int_{\theta=0}^{\pi/4} \exp \left( \frac{(\nu^2/4-(s-\nu/2)^2)\cos 2\theta}{\eta^2}\right) \,\,ds\,d\theta\nonumber
\\
%= & \frac{1}{\eta^2}\exp\left(\frac{\nu^2}{4\eta^2}\right) \int_{s=0}^{a^2}\int_{\theta=0}^{\pi/4} \exp \left( \frac{\nu^2}{4\eta^2}(\cos 2\theta -1)-\frac{(s-\nu/2)^2\cos 2\theta}{\eta^2}\right) \,\,ds\,d\theta\nonumber
%\\
= & \frac{1}{\eta^2}\exp\left(\frac{\nu^2}{4\eta^2}\right) \int_{s=0}^{a^2}\int_{\theta=0}^{\pi/4} \exp \left( \frac{\nu^2}{4\eta^2}(\cos 2\theta -1)\right)  \exp\left(-\frac{(s-\nu/2)^2\cos 2\theta}{\eta^2}\right) \,\,ds\,d\theta\nonumber
\\
> & \frac{1}{\eta^2}\exp\left(\frac{\nu^2}{4\eta^2}\right) \int_{\theta=0}^{\pi/4} \exp \left(- \frac{\nu^2\theta^2}{2\eta^2}\right) \,d\theta \int_{s=0}^{a^2}\exp\left(-\frac{(s-\nu/2)^2}{\eta^2}\right) \,\,ds.\nonumber
\end{align}
Evaluating these integrals we have
\begin{align}
T(\nu,\eta) > & %\frac{\sqrt{2\pi}}{2\eta\nu}\exp\left(\frac{\nu^2}{4\eta^2}\right) \mathrm{erf}\left(\frac{\pi \nu\sqrt{2}}{8\eta}\right)\int_{s=0}^{a^2} \exp \left( -\frac{(s-\nu/2)^2}{\eta^2}\right) \,ds\nonumber\\
%= 
 \frac{\pi\sqrt{2} }{4\nu} \exp \left( \frac{\nu^2}{4\eta^2}\right) \mathrm{erf}\left(\frac{\pi\nu\sqrt{2}}{8\eta}\right)\left[\mathrm{erf}\left(\frac{\nu}{2\eta}\right)+ \mathrm{erf}\left(\frac{2a^2-\nu}{2\eta}\right)\right].
\end{align}
Hence, for fixed $\nu>0$ and $2a^2>\nu$ we have $\mathrm{erf}(\nu/(2\eta))\approx\mathrm{erf}((2a^2-\nu)/(2\eta))\approx 1$ in the limit $\eta\rightarrow 0^+$, and hence
\begin{equation}
T(\nu,\eta) \geq \frac{\pi}{\nu\sqrt{2}}  \exp \left( \frac{\nu^2}{4\eta^2} \right).\label{eq:Kramersnu}
\end{equation}
i.e. Kramer's formula (\ref{eq:Kramersnuplus}) is a lower bound in this case. On the other hand, if both $\nu$ and $\eta$ are small, and $\nu/\eta$ is $O(1)$  then $\mathrm{erf}\left(\frac{\pi\nu\sqrt{2}}{8\eta}\right)\approx \frac{\nu\sqrt{\pi}}{4\eta \sqrt{2}}$ and so
\begin{align}
T(\nu,\eta) \geq & \frac{\pi}{\nu\sqrt{2}}  \exp \left( \frac{\nu^2}{4\eta^2} \right)\mathrm{erf}\left(\frac{\pi\nu\sqrt{2}}{8\eta}\right)\approx
\frac{\pi^{3/2}}{8\eta}  \exp \left( \frac{\nu^2}{4\eta^2} \right)
\label{eq:Kramersnuplustwo}
\end{align}
In summary, for small but fixed $\nu$ and $\eta\rightarrow 0^+$ we have 
$$
T(\nu,\eta) \approx \frac{K_1}{\nu}  \exp \left( \frac{\nu^2}{4\eta^2} \right)
$$
while for small $\nu$ and $\eta$ but $\nu/\eta$ being $O(1)$ we have 
\begin{equation}
T(\nu,\eta) \approx \frac{K_2}{\eta
}  \exp \left( \frac{\nu^2}{4\eta^2} \right)
\label{eq:Kramersbigeta}
\end{equation}

In Figure~\ref{fig:summaryres} we summarise the scalings we have obtained for mean residence time in the low noise limit, near bifurcation from heteroclinic to excitable connections, while in Figure~\ref{fig:escapes} we numerically verify examples of these scalings.

\begin{figure}%
\centerline{
\setlength{\unitlength}{5cm}
\begin{picture}(2,1)
\put(0,0){\includegraphics[height=5cm]{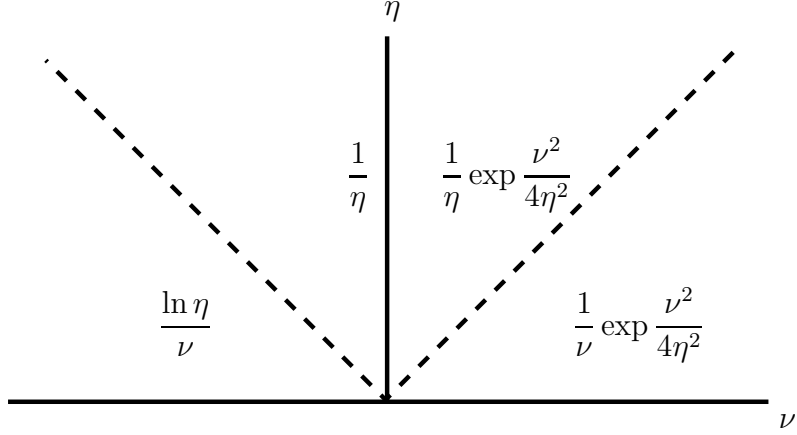}}
\put(2.05,-0.05){$\nu$}
\put(1,1.05){$\eta$}
\put(0.4,0.2){$\dfrac{\ln\eta}{\nu}$}
\put(0.9,0.6){$\dfrac{1}{\eta}$}
\put(1.5,0.2){$\dfrac{1}{\nu}\exp{\dfrac{\nu^2}{4\eta^2}}$}
\put(1.15,0.6){$\dfrac{1}{\eta}\exp{\dfrac{\nu^2}{4\eta^2}}$}
%\put(0.65,0.56){$\dfrac{1}{\nu}$}
\end{picture}
}
\caption{Schematic showing the asymptotic scalings of residence time $T(\nu,\eta)$ at a node, considered in the plane for fixed $\nu$ the leading eigenvalue at the node and $\eta$ the noise strength going to zero. In all cases $T$ is finite for $\eta>0$ but $T(\nu,\eta)\rightarrow \infty$ as $\eta\rightarrow 0^+$ for fixed $\nu$; how fast this diverges depends qualitatively on whether the associated connection is heteroclinic ($\nu<0$) or excitable ($\nu>0)$.}
\label{fig:summaryres}%
\end{figure}

\subsection{Simulation of escape for a one dimensional SDE}
\label{sec:onedimsims}

To illustrate the above scalings, we consider the SDE (\ref{eq:sdeescape}) for the potential \eqref{eq:mpotential}, i.e.
\begin{equation}
dy = (-y^4+2y^2-\nu)y dt+\eta dw.
\label{eq:yonly}
\end{equation}
We choose an $a>0$ that is away from all equilibria (typically we use $a=0.5$) and numerically compute the mean escape time
\begin{equation}
T(\nu,\eta) =  \langle \{T~:~|y(T)|=a \mbox{ and }|y(t)|<a \mbox{ for all }0<t<T\}\rangle
\end{equation}
where the mean is taken over the distribution of initial $y(0)$ and over realizations of the noise process in (\ref{eq:yonly}). Using a stochastic Euler approximation with timestep $h=0.01$ and $n=1000$ realizations for each calculation gives approximations of $T(\nu,\eta)$ as a function of $\nu$ and $\eta$; see Figure~\ref{fig:escapes}. In the three cases we verify agreement of the measured mean residence times with the predicted scalings in three cases. For $\nu=-0.01$ we show the best fit (black curve) to $T=A \ln(\eta)+B$ with $A=-96$ and $B=-369$; this compares well with the prediction $A=1/\nu=-100$ and $B=(-1-\ln(|\nu|)+\ln(1-\ln/(2a^2)))/\nu)=-362.5$ from equation \eqref{eq:nultzeroupper}.
For $\nu=0$ we show the best fit (red curve) to $T=A/\eta+B$ with $A=1.152$ and $B=2.378$, again, this compares well with the prediction $A=\sqrt{\pi/2}=1.2533$ from equation \eqref{eq:scalingalpha}.
For $\nu=0.01$ we show the best fit (blue curve) to $T=A/\eta\exp(B/\eta^2)$ with $A=1.4424$ and $B=2.027\times 10^{-5}$ - compare with $B=\nu^2/4=2.5\times 10^{-5}$ in \eqref{eq:Kramersbigeta}. 

In the first two cases we also find good agreement between fitting parameters and predicted values. In the third case we do not have a tight asymptotic fit but nevertheless, empirically there is a good fit to the scaling formula over this range. For the third case, we expect that the Kramers formula is more accurate for the range $2\eta<\nu$, though the timescales become extremely long.

\begin{figure}%
\begin{center}
\includegraphics[width=12cm]{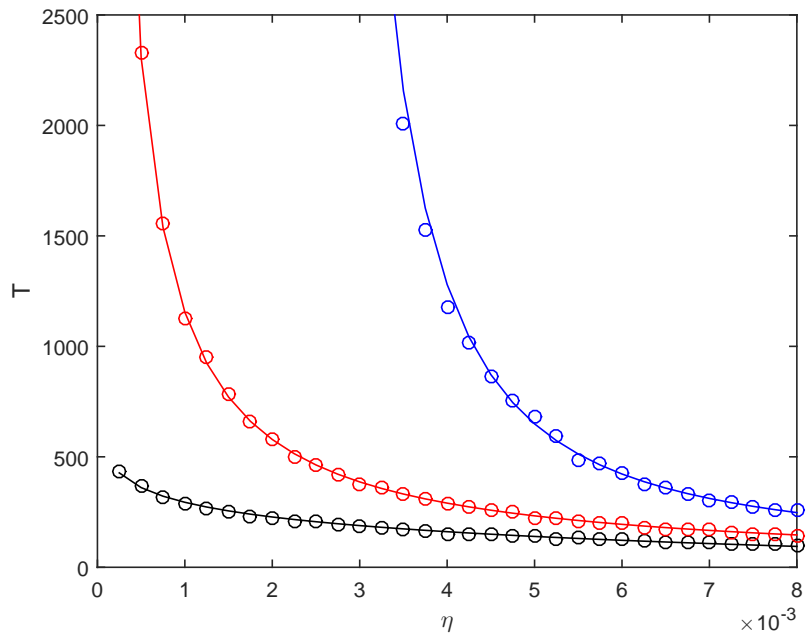}
\end{center}
\caption{The open circles show numerical estimation of the mean residence time $T$ from $y=0$ in (\ref{eq:yonly}) as a function of noise $\eta$ and parameter $\nu$: black is $\nu=-0.01$, red is $\nu=0$ and blue is $\nu=0.01$. We use first arrival at $|y|=a=0.5$ to detect escape.
For $\nu=-0.01$ we show the best fit (black curve) to $T=A \ln(\eta)+B$.
For $\nu=0$ we show the best fit (red curve) to $T=A/\eta+B$.
For $\nu=0.01$ we show the best fit (blue curve) to $T=A/\eta\exp(B/\eta^2)$: see text for more details.}%
\label{fig:escapes}%
\end{figure}

\section{Transition probabilities and multiple independent escape processes}
\label{sec:switching}

In order to understand transition probabilities we must consider noisy network attractors where there is more than one possible connection from a given node. We start by discussing the bi-directional ring from Section~\ref{sec:3graph}. For small noise, it turns out that the switching can be well-approximated by multiple independent escape processes: see Section~\ref{sec:multiplesc}. This gives evidence supporting Conjecture~\ref{conj:main}.

\subsection{Example: bi-directional ring around three nodes} \label{sec:example}

Consider a noisy network attractor that realises the bi-directional ring shown in Figure~\ref{fig:3graphs}(b). For simplicity we assume there is full permutation symmetry of the three nodes in the noise-free system. This means that there are two independent outgoing connections at each node, and the dynamics on these connections is the same. However, we choose the noise amplitude amplitudes $\eta_{cw}>0$ for clockwise (resp. $\eta_{acw}>0$ for anticlockwise) transitions that may be different.

Using the system detailed in Appendix~\ref{app:bidirthree}, equation~\eqref{eq:C3system2} we perform some numerical simulations on varying $\eta_{cw}$, $\eta_{acw}$ and a parameter $\nu$ for the case where connections in the network are (a) heteroclinic $\nu<0$ (b) at bifurcation $\nu=0$ and (c) excitable $\nu>0$. In each case, we verify that the residence time and transition probabilities for clockwise (resp. anticlockwise) transitions appear to vary continuously and monotonically with the amplitudes $\eta_{cw}$ (resp. $\eta_{acw}$): Figure~\ref{fig:example3bidir} shows this for the bifurcation case (b). Figure~\ref{fig:3plots} (a), (c) and (e) show all three cases.

\begin{figure}
	\centerline{
{\includegraphics[width=80mm]{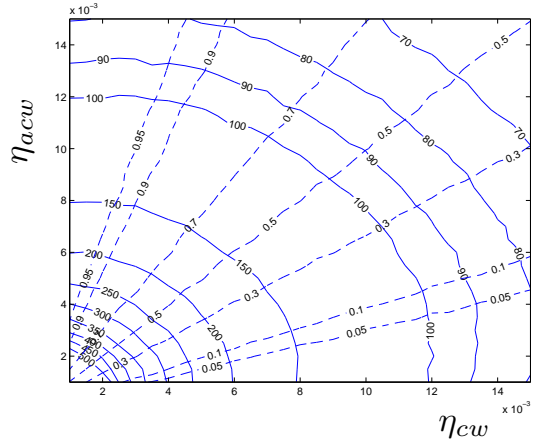}}
	}
	\caption{Contours showing the mean residence time (solid lines) and transition probability $\pi_{acw}$ (dashed lines) for anticlockwise transitions for the system described in Section~\ref{sec:example} for the critical (bifurcation) case ($\nu=0$), as a function of the noise amplitudes $\eta_{cw}$ (resp. $\eta_{acw}$) that excite transitions in the clockwise (resp. anticlockwise) directions. 
	Note that the mean residence times are constant on closed curves around the origin while the lines of constant transition probability are approximately radial. Similar plots are obtained for parameters that give heteroclinic or excitable networks in Figure~\ref{fig:3plots}.}
	\label{fig:example3bidir}
\end{figure}

We find that the choice of connection is well modelled by  multiple independent escape processes. This gives insight into the more general problem. We note that for a fully nonlinear SDE with a noisy network attractor it may however be very difficult to estimate transition probabilities analytically.

\subsection{Multiple independent escape processes}
\label{sec:multiplesc}

Consider $n$ independent escape processes, where we escape in direction $m=1,\ldots,n$ after a time given by a continuous random variable $T_m>0$ with distribution $\rho_m(T_m)$. A {\em multiple independent escape process} means that the first ``escape'' stops the process and identifies one particular direction of escape. More precisely, we say there is {\em  escape in direction $k$ at time $T$} in the case $T=T_k<T_i$ for all $i\neq k$. The distributions of random variables giving the first escape time $T$ and the escape direction $k$ are:
\begin{equation}
T=\min(T_1,\ldots,T_n),~~k=\argmin(T_1,\ldots,T_n).
\label{eq:firstescape}
\end{equation}
Let $\rho$ be the distribution of the random variable $T$ and $\kappa(k)$ the probabilities of the discrete random variable $m=k$.  The random variables $T$ are called order statistics \cite{David03}, and one can find these from the distribution $\rho_m$ of the individual escapes $T_m$ as follows:

\begin{lemma}
\label{lem:firstescape}
The distribution $\rho(T)$ of first escape times and the probability $\kappa(k)$ are given by
\begin{eqnarray*}
\rho(T) &=& \sum_{k=1}^n \left[\rho_k(T) \prod_{m=1,m\neq k}^{n} \int_{t_m=0}^{t_k} (1-\rho_m(t_m))\,dt_m \right]\\
\kappa(k) &= &\int_{t=0}^{\infty} \rho_k(t_k)\prod_{m=1,m\neq k}^{n} \int_{t_m=0}^{t_k} (1-\rho_m(t_m))\,dt_m \,dt_k
\end{eqnarray*}
\end{lemma}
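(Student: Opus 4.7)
The plan is to derive both formulas simultaneously by computing the joint distribution of the pair (first escape time, escape direction) using independence of $T_1, \ldots, T_n$, and then marginalising in two ways. This is the classical order-statistics argument.

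The key step is to observe that by the definitions in (\ref{eq:firstescape}), the event ``$T \in [t, t+dt]$ with escape direction $k$'' is precisely $\{T_k \in [t, t+dt]\} \cap \bigcap_{m \neq k} \{T_m > t\}$. Since $T_1, \ldots, T_n$ are independent continuous random variables with densities $\rho_m$, the joint density of $(T, k)$ on $[0,\infty) \times \{1, \ldots, n\}$ factorises as
\begin{equation*}
p(t, k) \;=\; \rho_k(t) \prod_{\substack{m = 1 \\ m \neq k}}^{n} S_m(t), \qquad S_m(t) \;=\; \Prob(T_m > t) \;=\; 1 - \int_0^t \rho_m(s)\, ds.
\end{equation*}
The inner factor appearing in the lemma is to be read as $S_m(t_k)$, i.e.\ the survival function of $T_m$ evaluated at $t_k$, which is exactly what the independent ``no-escape in direction $m$ before time $t_k$'' event contributes.

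Having obtained $p(t,k)$, the first formula in the lemma follows by summing over the $n$ mutually exclusive values of $k$, which yields the marginal density $\rho(T)$ of the minimum. The second formula follows by integrating $p(t,k)$ over $t \in [0,\infty)$, which yields the marginal probability $\kappa(k)$ of the discrete index. The proof presents no real obstacle: its content is simply the factorisation of the joint density, and both identities are then elementary marginalisations. The only item requiring brief care is noting that continuity of each $\rho_m$ ensures that ties among the $T_m$ occur with probability zero, so that $T$ and $k$ defined in (\ref{eq:firstescape}) are almost surely well defined, and the partition of the event space by the index of the minimiser is legitimate.
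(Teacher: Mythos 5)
Your proof is correct and is essentially the paper's argument: the paper's one-line proof asserts exactly your decomposition of the first-escape event into the $n$ mutually exclusive events indexed by the escaping direction, with independence supplying the product of survival probabilities, followed by summation over $k$ (for $\rho$) or integration over $t$ (for $\kappa$). Your reading of the inner factor as the survival function $1-\int_0^{t_k}\rho_m(s)\,ds$ rather than the literal $\int_0^{t_k}\bigl(1-\rho_m(s)\bigr)\,ds$ printed in the lemma is the intended one, since only the former is the probability that no escape in direction $m$ occurs before time $t_k$.
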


\proof This can be seen by noting that the distribution of $T$ is the sum distributions for the probability that the first escape happens in the $k$th direction at time $T$.
\qed

~

Standard results on order statistics imply that if the $T_k$ are all exponentially distributed as
$$
\rho_k(T_k)=\frac{1}{r_k}\exp(-T_k/r_k)
$$
for $r_k>0$ then
\begin{equation}
\rho(T) = \frac{1}{r}\exp(-T/r)
\label{eq:exptau}
\end{equation}
where $1/r=\sum_{m=1}^{n} (1/r_m)$. In other words, if the $T_m$ are exponentially distributed then so is $T$, and the mean rate of escape that is the average of the rates of escape of the individual processes. For this case we can compute
\begin{equation}
\kappa(k) = \frac{1/{r_k}}{\sum_{m=1}^{n} 1/{r_m}}= \frac{r}{r_k}.
\label{eq:expkappa}
\end{equation}
In this case the process with the fastest mean escape time will be the direction where escapes are most frequent.

For more general distributions for the individual processes, even if they remain independent, $\rho$ and $\kappa$ are not usually explicitly computable from the integral forms and indeed may be counter-intuitive for some sets of distribution of $\rho_k$, especially if they are multi-modal or the tails are of different weight. For example, suppose $n=2$ with $\rho_1(t_1) = 0.9 \delta(t_1-1)+0.1\delta(t_1-100)$ and $\rho_2(t_2)=\delta(t_2-2)$. Then $E(t_1)=10.9$ and $E(t_2)=2$, so mean escape time in direction 1 is much slower than in direction 2. On the other hand, the probability of the first escape occurring in direction 1 is higher than $0.9$!

For the noise-induced escape processes we consider, the distributions are determined by escapes from potential wells (and have exponential tails for low noise) or from near saddles (and may have faster-decaying tails); in both cases the distributions will not be exponential though for escape from the potential well it will have an exponential tail where the rate corresponds to the Kramers escape rate.

\subsection{Multiple escape times and transition probabilities}
\label{sec:switchingodes}

We illustrate a multiple independent escape process for a system of $n$ SDEs
\begin{equation}
dy_{k} = (-y_{k}^4+2y_{k}^2-\nu_{k})y_{k} dt+\eta_{k} dw_{k}.
\label{eq:yk}
\end{equation}
where $k=1,\ldots,n$ and $w_{k}$ are independent Brownian processes and $\nu_k,\eta_k$ are parameters and assume that we start at some $y(0)=0$; cf (\ref{eq:sdeescape}). We choose a $K>0$ that is away from all equilibria (typically we use $K=0.5$). There is escape in direction $k$ at time $\tau_k$ if 
$$
|y_k(\tau_k)|=K,\mbox{ and } |y_{k}(t)|<K\mbox{ for all }0<t<\tau_k
$$
and define $\tau$ to be the first escape and $k$ the direction of first escape as in (\ref{eq:firstescape}).

Using the multiple independent escape process (\ref{eq:yk}) with $n=2$ we can approximate the behaviour of switching for the example of the bi-directional ring on three nodes discussed in Section~\ref{sec:example}.

Figure~\ref{fig:3plots}, left column (a,c,e) shows the mean residence times and switching probabilities for the noisy network attractor on varying $\eta_1=\eta_{cw}$ and $\eta_2=\eta_{acw}$. The right column (b,d,e) shows the mean first escape times and escape probabilities for the multiple escape process (\ref{eq:yk}) with the corresponding $\eta$. The computations are performed using a stochastic Euler integrator with timestep $h=0.05$. The values of $\eta_k$ are discretized into $29$ steps in each direction.

Contours of mean first escape time $T$ (solid lines) and equiprobability $\pi_{cw}$ (dashed lines) are shown in Figure~\ref{fig:3plots}. Subfigures (a,b) with $\nu<0$ corresponds to $y=0$ being linearly unstable and a noisy heteroclinic connection with two outgoing directions, subfigures (e,f) the case $\nu>0$ corresponds to $y=0$ being a sink with a small basin and a noisy excitable connection with two outgoing directions, and subfigures (c,d) are the bifurcation case $\nu=0$. Observe that there is good quantitative and qualitative agreement in all three cases illustrated in Figure~\ref{fig:3plots}.  From these figures, we note the following:
\begin{itemize}
\item
The mean residence time $T$ decreases monotonically with $\eta_{aw}$ or $\eta_{acw}$. Also, $T\rightarrow \infty$ as $\max(\eta_{cw},\eta_{acw})\rightarrow 0^+$ in all cases.
\item
The transition probability $\pi_{cw}$ increases monotonically with $\eta_{cw}$ for fixed $\eta_{acw}$, moreover $\pi_{cw}\rightarrow 0$ as $\eta_{cw}/\eta_{acw}\rightarrow 0$ and $\eta_{cw}\rightarrow 1$ as $\eta_{cw}/\eta_{acw}\rightarrow 1$.
\end{itemize}
In summary, the numerical results in the left column of Figure~\ref{fig:3plots} suggest that Conjecture~\ref{conj:main} holds for all three cases of this symmetrised network where only the noise amplitudes break the symmetry. More precisely, by suitable choice of noise amplitudes $\eta_{cw},\eta_{acw}$ one can realise any transition probability $\pi_{cw}\in(0,1)$ and any sufficiently long mean residence time $T>0$.  As expected from the discussion in Section~\ref{sec:residence} the scaling properties of $\pi_{cw}$ and $T$ depend strongly on whether the network is heteroclinic or excitable, near the boundaries $\pi_{cw}=0,1$ and $T=\infty$.

\begin{figure}%
	\begin{center}
\includegraphics[width=130mm]{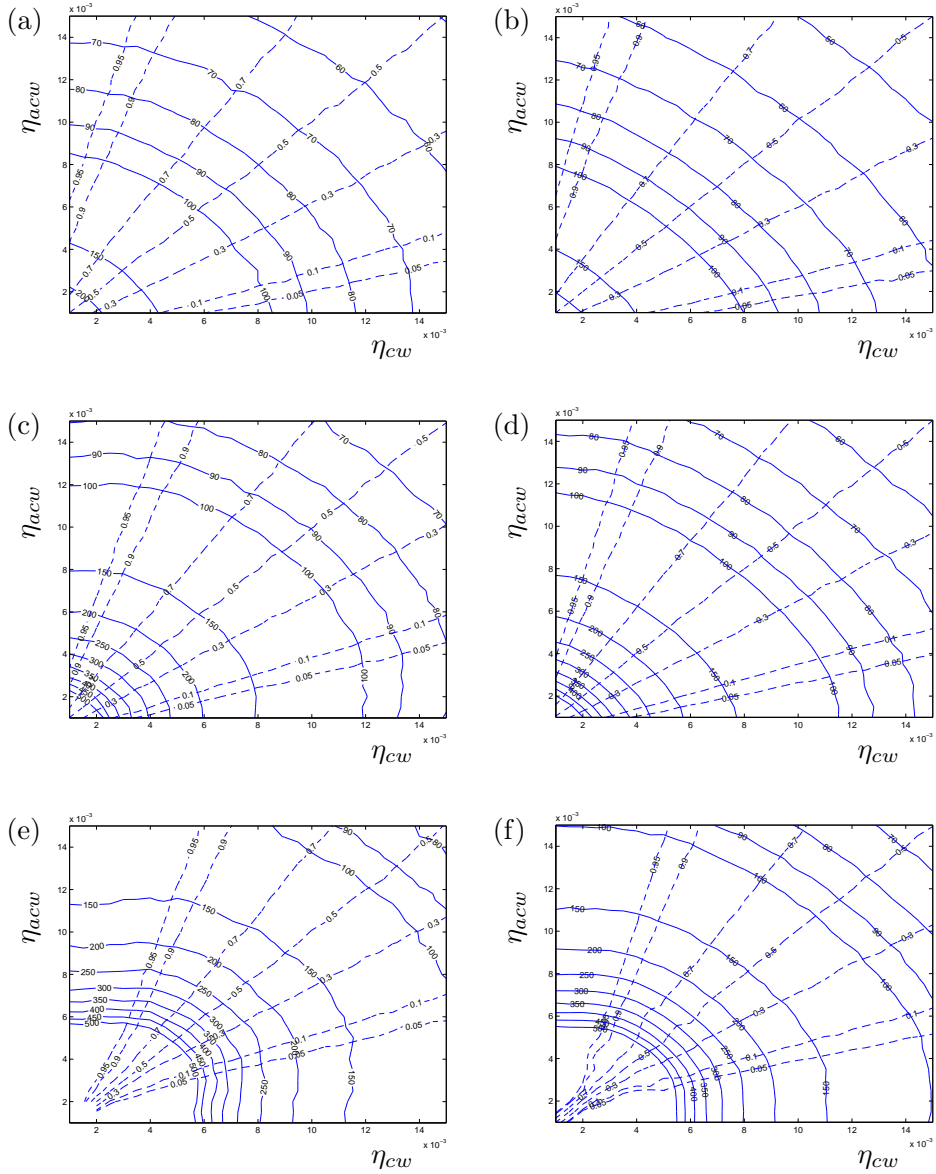}
	\end{center}
	\caption{(a,c,e): Contours of mean residence time $T$ (solid lines) and transition probability $\pi_{cw}$ (dashed lines) for clockwise motion in the system~\eqref{eq:C3system2} describing the bi-directional ring shown in Figure~\ref{fig:3graphs}(b). (b,d,f): Contours of mean first escape time (solid lines) and probability of first escape in direction $2$ (dashed lines) for the system (\ref{eq:yonly}) on varying $\eta_1=\eta_{cw}$, $\eta_2=\eta_{acw}$ with (a,b) $\nu=-0.01$, (c,d) $\nu=0$ and (e,f) $\nu=0.01$. Note in all cases there is good agreement. For the excitable case (e,f) there is a steep rise in residence time in the region where $\max\eta_i<\nu/2=0.005$}%
	\label{fig:3plots}%
\end{figure}

\subsection{Approximating multiple escape processes for excitable networks}

The formulae from Lemma~\ref{lem:firstescape} suggest that in general one cannot obtain the mean escape time or direction of escape from a multiple escape process simply from knowledge of the mean escape time of each process: one needs knowledge of the distribution of escape times for the individual processes. However, in the case of an excitable network where there are approximately exponential distributions of residence times, this is possible. 

Figure~\ref{fig:expplots}(a) shows the mean residence times and transition probabilities for the excitable case Figure~\ref{fig:3plots}(f) and Figure~\ref{fig:expplots}(b) shows that for the distribution (\ref{eq:exptau},\ref{eq:expkappa}), using a best fit to the exponential tail of a single escape process. More precisely, we use $T\approx (A/\eta) \exp(B/\eta^2)$ as in (\ref{eq:Kramersbigeta}) for the escape time $T$ in one direction with $\eta=0.01$, using $A=1.4$ and $B=2.430\times 10^{-5}$, cf. Figure~\ref{fig:escapes}.

\begin{figure}%
\begin{center}
\includegraphics[width=130mm]{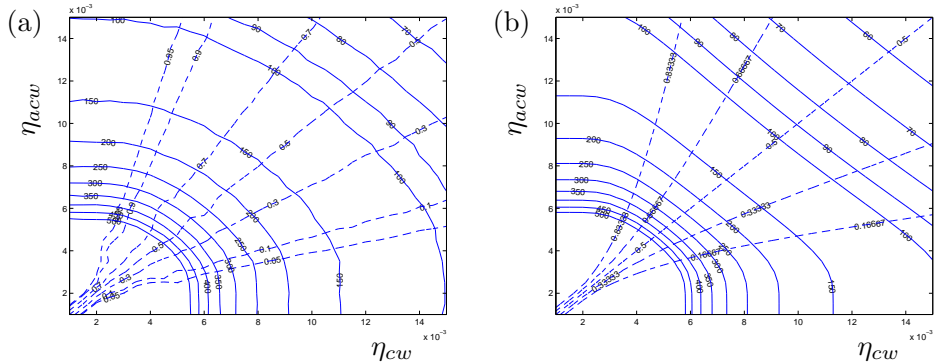}
\end{center}
\caption{(a) Contours showing mean first escape time $T$ (solid lines) and probability of first escape in direction $2$ (dashed lines) for the system (\ref{eq:yonly}) on varying $\eta_1=\eta_{cw}$, $\eta_2=\eta_{acw}$ with $\nu=0.01$, as in Figure~\ref{fig:3plots}(f). (b) Contours of probabilities of escape in direction $2$ (dashed lines) and mean first escape time (solid lines) for exponential distributions fitted to empirically determined means: see text for details.}
\label{fig:expplots}%
\end{figure}

\section{Discussion}
\label{sec:discuss}

There are a number of subtle effects of noise on heteroclinic networks that have been discussed in previous work~\cite{Bakhtin,ArmStoKir03}. This paper expands and extends this to noisy excitable networks that are created by bifurcation from heteroclinic in the noise-free case. Clearly, the mean properties of the macroscopic randomness (that is, the residence times at nodes and the transition probabilities between nodes) depend on the (anisotropic) noise amplitudes. Conjecture~\ref{conj:main} suggests that, {\em vice versa}, one can select noise amplitudes to approximate a Markov process on the network as a noisy network with given transition probabilities and mean residence times. We verify this for a simple case of a bi-directional ring network; in future work we will to explore this for more complex networks in the presence of noise perturbations.

It was noted in \cite{AshPos15} that noisy heteroclinic cycles will have approximately log normal distributions of residence times \cite{StoHol90} while excitable cycles will have exponential tails to the distributions of residence times. We believe that the \emph{distributions} of macroscopic fluctuations are much more difficult to determine from the microscopic noise distributions than the means. Quite complex distributions may result, for example if there are multiple connections between the same pair of nodes. 

For the weak noise case, we find numerical evidence that the residence times and transition probabilities can be characterised by modelling the transitions between nodes in the network as multiple independent escapes processes. This is not too much of a surprise, at least if the Jacobian is diagonalisable and the principal axes of noise correspond to these directions - an analysis in the general case will probably be much more complicated.  In Section~\ref{sec:residence} we use the approximation of a single escape process to obtain some asymptotic scalings of mean residence times - it will be a challenge to find more accurate and justified asymptotic expressions, especially for the excitable case, and to obtain asymptotic expressions for higher moments in the distributions.

We work here with networks where the transition probabilities are memoryless - that is, they are well-modelled by a first order Markov chain. In previous work~\cite{AshPos13} we discussed an example where this is not the case and noise-induced `lift-off'~\cite{ArmStoKir03} causes longer-term correlations in the sequence of nodes visited. It will be a challenge to understand properties of the long-term correlations and, for instance, whether they affect the scalings of residence times at the nodes.

There are many more open problems that deserve a detailed analysis - indeed, an appropriate definition of a noisy network attractor is still debatable. Should this be a statistical attractor whose empirical measures are close to delta functions on the nodes, or is a more stringent definition appropriate? Given a good definition, progress on Conjecture~1 may be possible in a general setting. 

Finally, we mention some potential applications. Heteroclinic network models have been used for modelling cognitive functions \cite{ashwin_karabacak_nowotny_2011,bick_rabinovich_2010,komarov_osipov_suykens_09,neves_timme_12,Rabinovich2001} as they have the ability to perform finite state computations, as well as the capacity to translate microscopic random fluctuations into macroscopic randomness. This randomness is manifested both in terms of the residence times at nodes of the network and in terms of the  transition probabilities between nodes and hence choice of possible paths around the network. In this paper, we have highlighted that this work should extend in a natural way to excitable networks. Excitable networks may indeed be a more natural way to understand computations.

\subsection*{Acknowledgments}
We thank many people for stimulating conversations that contributed to the development of this paper: in particular Chris Bick, Nils Berglund, Mike Field, John Terry, Ilze Ziedins. We thank the London Mathematical Society for support of a visit of CMP to Exeter, and the University of Auckland Research Council for supporting a visit of PA to Auckland during the development of this research. PA gratefully acknowledges the financial support of the EPSRC via grant EP/N014391/1.

\newpage

\appendix

\section{Construction of dynamics realising network attractors}
\label{app:realise}
\label{app:bidirthree}
\label{app:3graphs}

As outlined in \cite{AshPos15} we consider a system of coupled ODEs that realises a arbitrary directed graph $G=(\V,\E)$ as a heteroclinic or an excitable network (depending on parameters): 
\begin{equation}
\label{eq:realiseode}
\begin{split}
\d p_{j} & =  [p_j(F(1- p^2)+ D(p_j^2p^2-p^4))+ E(-Z^{(o)}_j(p,y)+Z^{(i)}_j(p,y)]\d t+ \eta_p \d w_{y,j}\\
\d y_{k} & = [G\left(y_{k},A-B p_{\alpha(k)}^2 +C (y^2-y_{k}^2\right))]\d t+ \eta_{y,k} \d w_{p,k}%
\end{split}
\end{equation}
for $j=1,\cdots,n_v$ and $k=1,\cdots,n_e$, where $p^2=\sum_{j=1}^{n_v} p_j^2$, $p^4=\sum_{j=1}^{n_v} p_j^4$, $y^2=\sum_{j=1}^{n_e} y_j^2$ and $A,B,C,D,E,F$ are constants. The function $G$ is defined by
\begin{equation}
G(y_{k},\lambda)= -y_{k}\left((y_{k}^2-1)^2+\lambda\right)
\label{eq:G}
\end{equation}
while the inputs to the $p_j$ cells from the $y$ cells are:
\begin{equation}
\begin{split}
Z^{(o)}_j(p,y) &= \sum_{\{k~:~\alpha(k)=j\}} -y_k^2p_{\omega(k)}p_j\\
Z^{(i)}_j(p,y) &= \sum_{\{k'~:~\omega(k')=j\}} y_{k'}^2p_{\alpha(k')}^2.
\end{split}
\end{equation}
For $\eta\equiv 0$ the system is an ODE and $\xi_{j}$ denote the unit basis vectors $(p,y)\in \R^{n_v+n_e}$: the first $n_v$ correspond to unit vectors where one of the $p_j$ is non-zero.  As shown in \cite{AshPos15}, the subspaces
$$
P_{\ell}=\{(p,y)~:~y_k=0~\mbox{ if }k\neq \ell~\mbox{ and }p_j=0~\mbox{ if }j\neq\alpha(\ell)\mbox{ or }\omega(\ell)\}
$$
for $\ell=1,\ldots,n_e$ are invariant for the flow generated by system (\ref{eq:realiseode}) and for suitable choice of parameters contain connections that realise the graph $G$ as a heteroclinic/excitable network embedded in phase space. For $\eta_{y,k}> 0$ there will be noise-induce motion around the network.

We choose a fixed noise amplitude $\eta_p=10^{-3}$ for the $p$ variables and default parameters
\begin{equation}
A=0.5,~B=1.5-\nu,~C=2,~ D=10,~~E=4,~F=2.
\label{eq:systemparams}
\end{equation}
For $\nu<0$ close to zero this realises a heteroclinic network, while for $\nu>0$ close to zero it realises an excitable network with a small threshold. The case $\nu=0$ corresponds to bifurcation between the two types of network: see \cite[Fig. 4]{AshPos15} for more details and justification that the networks are heteroclinic/excitable for these parameter values.

\subsection*{Unidirectional and bi-directional loops around three nodes.}

In order to realise noisy versions of heteroclinic or excitable networks for the graphs illustrated in Figure~\ref{fig:3graphs}(a,b) we consider the following systems of equations. For the uni-directional ring (a) we consider
\begin{equation}
\begin{split}
\dot{p}_1&=p_1(F(1-p^2)+D(p_1^2p^2-p^4))+E(-y_1^2 p_1p_2-y_6^2 p_1p_3 + y_2^2 p_3^2 )+\eta_p w_1\\
\dot{p}_2&=p_2(F(1-p^2)+D(p_2^2p^2-p^4))+E(-y_2^2 p_2p_3-y_4^2 p_2p_1 + y_1^2 p_1^2)+\eta_p w_2\\
\dot{p}_3&=p_3(F(1-p^2)+D(p_3^2p^2-p^4))+E(-y_3^2 p_3p_1-y_5^2 p_3p_2 + y_2^2 p_2^2)+\eta_p w_3\\
\dot{y}_1&=G(y_1,A-B p_1^2+C(y^2-y_1^2))+\eta_1 w_4\\
\dot{y}_2&=G(y_2,A-B p_2^2+C(y^2-y_2^2))+\eta_2 w_5\\
\dot{y}_3&=G(y_3,A-B p_3^2+C(y^2-y_3^2))+\eta_3 w_6\\
\end{split}
\label{eq:C3system1}
\end{equation}
while for the bi-directional ring (b) we consider
\begin{equation}
\begin{split}
\dot{p}_1&=p_1(F(1-p^2)+D(p_1^2p^2-p^4))+E(-y_1^2 p_1p_2-y_6^2 p_1p_3 + y_2^2 p_3^2 + y_4^2 p_2^2)+\eta_p w_1\\
\dot{p}_2&=p_2(F(1-p^2)+D(p_2^2p^2-p^4))+E(-y_2^2 p_2p_3-y_4^2 p_2p_1 + y_1^2 p_1^2 + y_5^2 p_3^2)+\eta_p w_2\\
\dot{p}_3&=p_3(F(1-p^2)+D(p_3^2p^2-p^4))+E(-y_3^2 p_3p_1-y_5^2 p_3p_2 + y_2^2 p_2^2 + y_6^2 p_1^2)+\eta_p w_3\\
\dot{y}_1&=G(y_1,A-B p_1^2+C(y^2-y_1^2))+\eta_1 w_4\\
\dot{y}_2&=G(y_2,A-B p_2^2+C(y^2-y_2^2))+\eta_2 w_5\\
\dot{y}_3&=G(y_3,A-B p_3^2+C(y^2-y_3^2))+\eta_3 w_6\\
\dot{y}_4&=G(y_4,A-B p_2^2+C(y^2-y_1^2))+\eta_4 w_7\\
\dot{y}_5&=G(y_5,A-B p_3^2+C(y^2-y_2^2))+\eta_5 w_8\\
\dot{y}_6&=G(y_6,A-B p_1^2+C(y^2-y_3^2))+\eta_6 w_9.
\end{split}
\label{eq:C3system2}
\end{equation}
We choose the standard set of parameters (\ref{eq:systemparams}) and vary both $\nu$ and (low amplitude) noise added to both $p_i$ and $y_i$ variables. We set $\nu=-0.01$ for the heteroclinic case and $\nu=0.01$ for the excitable case unless otherwise stated. Three the noise amplitudes are set the same
$$
\eta_1=\eta_2=\eta_3=\eta_{cw}.
$$
representing the amplitude of noise that promotes the clockwise transitions in Figure~\ref{fig:3graphs}(a,b). For the bi-directional case (b) we also set
$$
\eta_4=\eta_5=\eta_6=\eta_{acw}
$$
as the amplitude of the noise that promotes anticlockwise transitions. The one-dimensional observable
$$
S(t)= \sum_{k=1}^3 k y_k^2(t)
$$
has the property that $S(t)\approx k$ whenever the trajectory is near the equilibrium $\xi_k$ and can be used to observed the state of the system.

\newpage

\bibliographystyle{plain}

\end{document}